\documentclass[journal]{IEEEtran}
\usepackage{amsmath}
\usepackage{amssymb}
\usepackage{amsfonts}
\usepackage{graphicx}
\usepackage{epsfig}
\usepackage{subfigure}
\usepackage{psfrag}
\hyphenation{op-tical net-works semi-conduc-tor}
\usepackage{color}

\begin{document}
\title{Optimal Save-Then-Transmit Protocol for Energy Harvesting Wireless Transmitters}

\author{Shixin Luo, Rui Zhang,~\IEEEmembership{Member,~IEEE}, and Teng Joon Lim,~\IEEEmembership{Senior Member,~IEEE}

\thanks{S. Luo and T. J. Lim are with the Department of
Electrical and Computer Engineering, National University of
Singapore (e-mail:\{shixin.luo, eleltj\}@nus.edu.sg).}
\thanks{R. Zhang is with the Department of Electrical and Computer Engineering, National
University of Singapore (e-mail: elezhang@nus.edu.sg). He is also
with the Institute for Infocomm Research, A*STAR, Singapore.}}
\maketitle

\begin{abstract}\label{sec:abstract}
In this paper, the design of a wireless communication device relying
exclusively on energy harvesting is considered. Due to the inability
of rechargeable energy sources to charge and discharge at the same
time, a constraint we term the \emph{energy half-duplex constraint},
two rechargeable energy storage devices (ESDs) are assumed so that
at any given time, there is always one ESD being recharged. The
energy harvesting rate is assumed to be a random variable that is
constant over the time interval of interest. A
\emph{save-then-transmit} (ST) protocol is introduced, in which a
fraction of time $\rho$ (dubbed the save-ratio) is devoted
exclusively to energy harvesting, with the remaining fraction
$1-\rho$ used for data transmission. The ratio of the energy
obtainable from an ESD to the energy harvested is termed the
\emph{energy storage efficiency}, $\eta$. We address the practical
case of the secondary ESD being a battery with $\eta < 1$, and the
main ESD being a super-capacitor with $\eta = 1$. Important properties of the optimal
save-ratio that minimizes outage probability are derived, from which useful design guidelines are drawn. In addition, we compare the
outage performance of random power supply to that of constant power
supply over the Rayleigh fading channel. The diversity order with
random power is shown to be the same as that of constant power, but
the performance gap can be large. Finally, we extend the proposed
ST protocol to wireless networks with multiple transmitters. It is shown that the system-level outage
performance is critically dependent on the number of transmitters and the optimal save-ratio for
single-channel outage minimization.
\end{abstract}

\begin{keywords}
Energy harvesting, save-then-transmit protocol, outage minimization,
fading channel, energy half-duplex constraint, energy storage
efficiency, TDMA.
\end{keywords}

\IEEEpeerreviewmaketitle
\setlength{\baselineskip}{1.0\baselineskip}
\newtheorem{definition}{\underline{Definition}}[section]
\newtheorem{fact}{Fact}
\newtheorem{assumption}{Assumption}
\newtheorem{theorem}{\underline{Theorem}}[section]
\newtheorem{lemma}{\underline{Lemma}}[section]
\newtheorem{corollary}{\underline{Corollary}}[section]
\newtheorem{proposition}{\underline{Proposition}}[section]
\newtheorem{example}{\underline{Example}}[section]
\newtheorem{remark}{\underline{Remark}}[section]
\newtheorem{algorithm}{\underline{Algorithm}}[section]
\newcommand{\mv}[1]{\mbox{\boldmath{$ #1 $}}}

\section{Introduction}\label{sec:introduction}
The operation of communication networks powered either largely or
exclusively by renewable sources has become increasingly attractive,
both due to the increased desire to reduce energy consumption in
human activities at large, and due to necessity brought about by the
concept of networking heterogeneous devices ranging from medical
sensors on/in the human body to environment sensors in the
wilderness \cite{Akyidiz2002,Sina2001}. Sensor nodes are powered by
batteries that often cannot be replaced because of the
inaccessibility of the devices. Therefore, once the battery of a
sensor node is exhausted, the node dies. Thus the potentially
maintenance-free and virtually perpetual operation offered by energy
harvesting, whereby energy is extracted from the environment, is
appealing.

The availability of an inexhaustible but unreliable energy source
changes a system designer's options considerably, compared to the
conventional cases of an inexhaustible reliable energy source
(powered by the grid), and an exhaustible reliable energy source
(powered by batteries). There has been recent research on
understanding data packet scheduling with an energy harvesting
transmitter that has a rechargeable battery, most of which employed
a deterministic energy harvesting model. In \cite{YangUlukus101},
the transmission time for a given amount of data was minimized
through power control based on known energy arrivals over all time.
Structural properties of the optimum solution were then used to
establish a fast search algorithm. This work has been extended to
battery limited cases in \cite{Yener10}, battery imperfections in
\cite{Deviller11, Yener12}, and the Gaussian relay channel in \cite{Huang11}.
Energy harvesting with channel fading has been investigated in
\cite{UlukusYener11} and \cite{HoZhang11}, wherein a water-filling
energy allocation solution where the so-called water levels follow a
staircase function was proved to be optimal.

In scenarios where multiple energy harvesting wireless devices
interact with each other, the design needs to adopt a system-level
approach \cite{Koksal2010, iannello2011, Sharma08}. In \cite{iannello2011},
the medium access control (MAC) protocols for single-hop wireless
sensor networks, operated by energy harvesting capable devices, were
designed and analyzed. In \cite{Sharma08}, $N$ energy harvesting nodes with independent data and energy queues were considered, and the queue stability was analyzed under different MAC protocols. An information theoretic analysis of energy
harvesting communication systems has been provided in
\cite{OzelUlukus10, Rajesh2011}. In \cite{OzelUlukus10}, the authors
proved that the capacity of the AWGN channel with stochastic energy
arrivals is equal to the capacity with an average power constraint
equal to the average recharge rate. This work has been extended in
\cite{Rajesh2011} to the fading Gaussian channels with perfect/no
channel state information at the transmitter.

Due to the theoretical intractability of online power scheduling
under the energy causality constraint (the cumulative energy
consumed is not allowed to exceed the cumulative energy harvested at
every point in time), most current research is focused on an offline
strategy with deterministic channel and energy state information,
which is not practical and can only provide an upper bound on
system performance. An earlier line of research considers the
problem of energy management, with only causal energy state
information, in communications satellites \cite{Alvin03}, which
formulated the problem of maximizing a reward that is linear in
the energy as a dynamic programming problem. In \cite{sharma10TWC},
energy management policies which stabilize the data queue have been
proposed for single-user communication under linear energy-rate
approximations.

In this paper, we focus our study on the design of practical circuit model and transmission protocol for energy harvesting wireless transmitters. To be more specific, we consider a wireless system with one transmitter
and one receiver, with the transmitter using a {\it
save-then-transmit} (ST) protocol (see Fig. \ref{fig:Systemmodel})
to deliver $Q$ bits within $T$ seconds, the duration of a
transmission frame. Because rechargeable energy storage devices
(ESDs) cannot both charge and discharge simultaneously (the {\em energy half-duplex constraint}), an
energy harvesting transmitter needs two ESDs, which we call the main
ESD (MESD) and secondary ESD (SESD). The transmitter draws power
from the MESD for data transmission, over which time the SESD is
connected to the energy source and charges up. At the end of
transmission for a frame, the SESD transfers its stored energy to
the MESD. A fraction $\rho$ (called the save-ratio) of every frame
interval is used exclusively for energy harvesting by the
MESD.\footnote{Note that the energy source can be connected only to
either the SESD or the MESD, but not both.} The {\it energy storage
efficiency}, denoted by $\eta$, of each ESD may not be 100 percent,
and a fixed amount of power $P_c$ is assumed to be consumed by the
transmitter hardware whenever it is powered up. The frame interval
$T$ is assumed to be small relative to the time constant of changes
in the ESD charging rate (or energy arrival rate). The energy
arrival rate is therefore modeled as a random variable $X$ in
Joules/second, which is assumed to be constant over a frame.

\begin{figure}
\centering
\epsfxsize=0.7\linewidth
    \includegraphics[width=8cm]{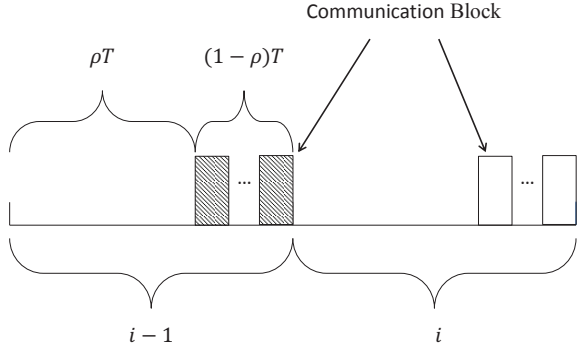}
\caption{Save-Then-Transmit (ST) Protocol } \label{fig:Systemmodel}
\end{figure}

Under the above realistic conditions, we minimize the outage
probability (to be defined in the next section) over $\rho$, when
transmitting over a block fading channel with an arbitrary fading
distribution. In this work, we particularize to the case where the
MESD is a high-efficiency super-capacitor with $\eta = 1$, and the
SESD is a low-efficiency rechargeable battery with $0 \leq \eta \leq 1$.
Based on the outage analysis, we compare the performance between two
system setups: the (new) case with random power supply versus the
(conventional) case with constant power supply, over the Rayleigh
fading channel. It is shown that energy harvesting, which results in
time-varying power availability in addition to the randomness of the
fading channel, may severely degrade the outage performance. To be
concrete, we further consider exponentially
distributed random power, and show that although the diversity order
with exponential power is the same as that with constant power over the
Rayleigh fading channel, the outage probability curve may only display
the slope predicted by this diversity analysis at substantially higher SNRs.

Finally, we extend the ST protocol for the single-channel case
to the general case of wireless network with multiple transmitters.
We propose a time division multiple access (TDMA) based ST (TDMA-ST)
protocol to allocate orthogonal time slots to multiple transmitters
that periodically report to a fusion center. Specifically, we
consider two types of source data at transmitters as follows:
\begin{itemize}
\item {\bf Independent Data}: transmitters send independent data packets to the fusion
center for independent decoding;
\item {\bf Common Data}: transmitters send identical data packets to the fusion center, where diversity combining is applied to decode the common data.
\end{itemize}
It is shown that for both cases if the number of transmitters $N$ is
smaller than the reciprocal of the optimal transmit-ratio ($1-\rho$)
for the single-channel outage minimization, all transmitters can
operate at their individual minimum outage probability. However, as
$N$ goes up and exceeds this threshold, the system-level outage
performance behaves quite differently for the two types of source
data.

The rest of this paper is organized as follows. Section
\ref{sec:system model} presents the system model. Section
\ref{sec:outage minimization} considers finding the optimal
save-ratio for outage minimization and analyzes its various
properties. Section \ref{sec:diversity} compares the outage
performance between fixed power and random power. Section
\ref{sec:multiple transmitter} introduces the TDMA-ST protocol for
the multi-transmitter case. Section \ref{sec:numerical} shows
numerical results. Finally, Section \ref{sec:conclusion} concludes
the paper.

\section{System Model}\label{sec:system model}
\subsection{Definitions and Assumptions}\label{sec:definition assumption}

The block diagram of the system is given in Fig.
\ref{fig:Circuitmodel}. The energy harvested from the
environment\footnote{Wind, solar, geothermal, etc.} is first stored in
either the MESD or the SESD at any given time, as indicated by
switch $a$, before it is used in data transmission. The MESD powers the transmitter directly and usually has
high power density, good recycle ability and high efficiency, e.g. a
super-capacitor \cite{Jayalakshmi2008}. Since the MESD cannot charge
and discharge simultaneously, a SESD (e.g. rechargeable battery)
stores up harvested energy while the transmitter is on, and
transfers all its stored energy to the MESD once the transmitter is
off. We assume in the rest of this paper that the SESD is a battery
with an efficiency $\eta$,\footnote{In practice, the battery efficiency can vary from $60\%$ to $99\%$, depending on different recharging technologies \cite{battery}.} where $\eta \in [0, 1]$. This means that
a fraction $\eta$ of the energy transferred into the SESD during
charging can be subsequently recovered during discharging. The other $1-\eta$ fraction of the energy is thus lost, due to e.g. battery leakage and/or circuit on/off overhead. The reason of choosing a single-throw switch (switch $a$ in Fig. \ref{fig:Circuitmodel}) between the energy harvesting device (EHD) and ESDs is that splitting the harvested energy with a portion going to the SESD, when the transmitter does not draw energy from the MESD, is not energy efficient due to the SESD's lower efficiency. Note that at the current stage of research, the optimal detailed structure of an energy harvesting transmitter is not completely known and there exist various models in the literature (see e.g. \cite{UlukusYener11, Yener12, HoZhang11}). The proposed circuit model, given in Fig. \ref{fig:Circuitmodel}, provides one possible practical design.

\begin{figure}
\centering
\epsfxsize=0.7\linewidth
    \includegraphics[width=8cm]{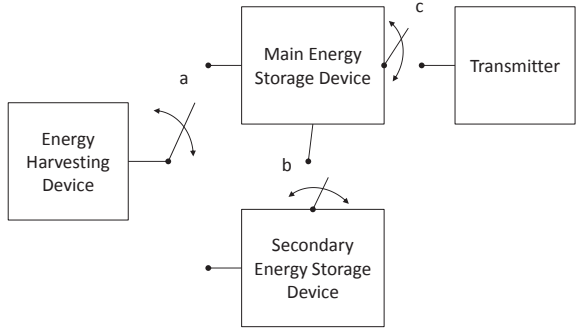}
\caption{Energy Harvesting Circuit Model} \label{fig:Circuitmodel}
\end{figure}

We assume that $Q$ bits of data are generated and must be
transmitted within a time slot of duration $T$ seconds (i.e., delay
constrained). In the proposed ST protocol, the save-ratio $\rho$ is
the reserved fraction of time for energy harvesting by the MESD
within one transmission slot. In other words, data delivery only
takes place in the last $(1-\rho)T$ seconds of each time slot, which
results in an effective rate of $R_{\mathrm{eff}} =
\frac{Q}{(1-\rho)T}$ bits/sec. We also allow for a constant power
consumption of $P_c$ Watts by the transmitter hardware whenever it
is powered on. The combined influence of $\rho$, $\eta$ and $P_c$ on
outage probability is quantified in this work.

Assume a block-fading frequency-nonselective channel, where the channel is constant over the time slot $T$. Over any time slot, the baseband-equivalent channel output is given by
\begin{equation} \label{eq:single channel model}
y = h\cdot x + n,
\end{equation}
where $x$ is the transmitted signal, $y$ is the received signal, and $n$ is i.i.d. circularly symmetric complex Gaussian (CSCG) noise with zero mean and variance $\sigma_n^2$.

For any frame, the ST protocol (cf. Fig. \ref{fig:Systemmodel}) is
described as follows:
\begin{itemize}
\item During time interval $(0, \rho T]$, harvested energy accumulates in
the MESD, which corresponds to the
situation that switches $b$, $c$ are open and $a$ connects to the MESD
in Fig. \ref{fig:Circuitmodel};

\item From time $\rho T$ to $T$, the transmitter is powered on for transmission with energy from the MESD.
Since the transmitter has no knowledge of the channel state, we assume that all the buffered energy in the MESD is used up (best-effort
transmission) in each frame. Since the MESD cannot charge and discharge at the same
time, the SESD starts to store up harvested energy while the
transmitter is on. Referring to Fig. \ref{fig:Circuitmodel}, $c$ is
closed, $b$ is open and $a$ switches to the SESD;

\item At time $T$, the transmitter completes the transmission and powers off. The SESD transfers all its buffered energy to
the MESD within a negligible charging time, at efficiency $\eta$. In
other words, $b$ is closed and switches $a$ and $c$ are open in Fig.
\ref{fig:Circuitmodel}.
\end{itemize}

\subsection{Outage Probability}\label{sec:outage probability definition}
It is clear that the energy harvesting rate $X$ is a non-negative random variable with finite
support, i.e., \ $0 \leq X \leq P_H < \infty$, as the maximum amount
of power one can extract from any source is finite. Suppose $f_X(x)$
and $F_X(x)$ represent its probability density function (PDF) and
cumulative distribution function (CDF), respectively. According to
the proposed ST protocol, the total buffered energy in the MESD at
$t = \rho T$ (the start of data transmission within a transmission
slot) is given by
\begin{equation}\label{eq:buffered energy}
E_T = X\left[\rho+\eta(1-\rho)\right]T.
\end{equation}
Denote $P = \frac{E_T}{(1-\rho)T} =
X\left[\frac{\rho}{1-\rho}+\eta\right]$ as the average total
power, which is constant over the entire transmission period, and
$P_c$ as the circuit power (i.e.\ the power consumed by the hardware
during data transmission), again assumed constant. The mutual
information of the channel (\ref{eq:single channel model})
conditioned on $X$ and the channel gain $h$ is (assuming $P > P_c$)
\begin{equation}\label{eq:mutual information}
R_T = \log_2\left(1+\frac{(P-P_c)|h|^2}{\sigma_n^2}\right) = \log_2\left(1+(P-P_c)\Gamma\right)
\end{equation}
where $\Gamma = \frac{|h|^2}{\sigma_n^2}$ with PDF $f_{\Gamma}(\cdot)$ and CDF $F_{\Gamma}(\cdot)$.

For a transmitter with energy harvesting capability and working
under the ST protocol, the outage event is the union of two mutually
exclusive events: Circuit Outage and Channel Outage. Circuit outage
occurs when the MESD has insufficient energy stored up at $t = \rho
T$ to even power on the hardware for the duration of transmission
i.e.\ $E_T < P_c (1-\rho)T$ or equivalent $P < P_c$. Channel outage
is defined as the MESD having sufficient stored energy but the
channel realization does not support the required target rate
$R_{\mathrm{eff}} = \frac{Q}{(1-\rho)T}$ bits/s.

Recalling that $X \in [0,P_H]$, the probabilities of Circuit Outage and Channel Outage are therefore:
\begin{align}
P_{out}^{circuit} & = \mbox{Pr}\left\{P < P_c\right\} \nonumber \\
& = \left\{ \begin{array}{cl} \displaystyle
F_X\left[\phi(\cdot)\right] & \mbox{if } P_H > \phi(\cdot) \\
1 & \mbox{otherwise}. \end{array}\right. \\
P_{out}^{channel} & = \mbox{Pr}\left\{\log_2\left(1+(P-P_c)\Gamma\right) < R_{\mathrm{eff}}, P > P_c \right\} \nonumber \\
& = \mbox{Pr}\left\{\Gamma < \frac{2^{R_{\mathrm{eff}}}-1}{P-P_c}, P > P_c \right\} \nonumber \\
& = \left\{ \begin{array}{cl} \displaystyle
\int_{\phi(\cdot)}^{P_H}f_X(x)F_{\Gamma}\left[g(\cdot)\right]dx & \mbox{if } P_H > \phi(\cdot) \\
0 & \mbox{otherwise}. \end{array}\right. \label{eq:channel outage}
\end{align}
where $g(\rho, \eta, P_c) =
\frac{2^{\frac{Q}{(1-\rho)T}}-1}{x[\frac{\rho}{1-\rho}+\eta]-P_c}$
and $\phi(\rho, \eta, P_c) = \frac{P_c}{\frac{\rho}{1-\rho} +
\eta}$. Since Circuit Outage and Channel Outage are mutually
exclusive, it follows that
\begin{align}\label{eq:outage definition}
P_{out} & =  P_{out}^{circuit} + P_{out}^{channel} \nonumber \\
& = \left\{ \begin{array}{cl} \displaystyle
F_X\left[\phi(\cdot)\right] +\\
\int_{\phi(\cdot)}^{P_H}f_X(x)F_{\Gamma}\left[g(\cdot)\right]dx & \mbox{if } P_H > \phi(\cdot) \\
1 & \mbox{otherwise}. \end{array}\right.
\end{align}
For convenience, we define
\begin{align}\label{eq:simplified outage}
\hat{P}_{out}(\rho, \eta, P_c) = F_X\left[\phi(\cdot)\right] + \int_{\phi(\cdot)}^{P_H}f_X(x)F_{\Gamma}\left[g(\cdot)\right]dx
\end{align}
where $\hat{P}_{out}(\rho, \eta, P_c) < 1$ and $P_H > \phi(\cdot)$.

Unlike the conventional definition of outage probability in a block
fading channel, which is dependent only on the fading distribution
and a {\em fixed} average transmit power constraint, in an energy
harvesting system with block fading and the ST protocol, {\em both}
transmit power and channel are random, and the resulting outage is
thus a function of the save-ratio $\rho$, the battery efficiency
$\eta$ and the circuit power $P_c$.

\section{Outage Minimization}\label{sec:outage minimization}
In this section, we design the save-ratio $\rho$ for the ST protocol
by solving the optimization problem
\begin{align}
\mathrm{(P1)}:~\mathop{\mathtt{min.}}_{0 \leq \rho \leq 1} &
~~ P_{out} \nonumber
\end{align}
i.e.\ minimize average outage performance $P_{out}$ in
(\ref{eq:outage definition}) over $\rho$, for any given $\eta \in
[0, 1]$ and $P_c \in [0, \infty)$. Denote the optimal (minimum)
outage probability as $P_{out}^*(\eta, P_c)$ and the optimal
save-ratio as $\rho^*(\eta, P_c)$. Note that $\rho \nearrow 1$
represents transmission of a very short burst at the end of each
frame, and the rest of each frame is reserved for MESD energy harvesting.
$\rho =0$ is another special case, in which the energy consumed in
frame $i$ was collected (by the SESD) entirely in frame $i-1$. (P1)
can always be solved through numerical search, but it is challenging
to give a closed-form solution for $\rho^*(\eta, P_c)$ in terms of
$P_c$ and $\eta$ in general. We will instead analyze how
$\rho^*(\eta, P_c)$ varies with $P_c$ and $\eta$ and thereby get
some insights in the rest of this section.

\begin{proposition}\label{proposition:1}
$P_{out}(\rho, \eta, P_c)$ in (\ref{eq:outage definition}) is a
non-increasing function of battery efficiency $\eta$ and a
non-decreasing function of circuit power $P_c$ for $\rho \in [0,
1)$. The optimal value of (P1) $P_{out}^*(\eta, P_c)$ is strictly
decreasing with $\eta$ and strictly increasing with $P_c$.
\end{proposition}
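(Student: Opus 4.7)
The plan is to recast the outage probability in (\ref{eq:outage definition}) as a single level-set probability that makes the dependence on $\eta$ and $P_c$ transparent. Define
\[
A(x;\rho,\eta,P_c) := x\left[\frac{\rho}{1-\rho}+\eta\right]-P_c, \qquad \tau(\rho) := 2^{Q/[(1-\rho)T]}-1.
\]
The circuit-outage event $\{A(X)<0\}$ and the channel-outage event $\{\Gamma < \tau/A(X),\,A(X)>0\}$ merge into the single event $\{\Gamma\cdot A(X) < \tau(\rho)\}$, because $\Gamma\geq 0$ and $\tau>0$ force $\Gamma A(X)\leq 0<\tau$ whenever $A(X)\leq 0$. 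Hence
\[
P_{out}(\rho,\eta,P_c) \;=\; \Pr\{\Gamma\cdot A(X;\rho,\eta,P_c) < \tau(\rho)\},
\]
and the trivial branch $P_H\leq \phi(\rho,\eta,P_c)$ is absorbed, since there $A(X)\leq 0$ almost surely and $P_{out}=1$.

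Next, for any $\rho\in[0,1)$ and $x\geq 0$, $A(x;\rho,\eta,P_c)$ is non-decreasing in $\eta$, non-increasing in $P_c$, and strictly so on $\{x>0\}$. Since $\Gamma\geq 0$, this pointwise monotonicity transfers to $\Gamma\cdot A$, giving inclusion of level sets: $\{\Gamma A<\tau\}$ shrinks as $\eta$ grows and grows as $P_c$ grows. Taking probabilities yields the non-increasing dependence of $P_{out}(\rho,\eta,P_c)$ on $\eta$ and the non-decreasing dependence on $P_c$ for every $\rho\in[0,1)$.

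For the strict monotonicity of $P_{out}^*$, fix $P_c$, take $\eta_2>\eta_1$, and let $\rho_2^*$ attain $P_{out}^*(\eta_2,P_c)$. By the monotonicity just established,
\[
P_{out}^*(\eta_1,P_c) \leq P_{out}(\rho_2^*,\eta_1,P_c), \qquad P_{out}^*(\eta_2,P_c) = P_{out}(\rho_2^*,\eta_2,P_c),
\]
so it suffices to show $P_{out}(\rho_2^*,\eta_1,P_c) > P_{out}(\rho_2^*,\eta_2,P_c)$. This amounts to showing the symmetric difference $\{\Gamma A(X;\eta_1)<\tau\}\setminus\{\Gamma A(X;\eta_2)<\tau\}$ has positive probability, which follows because $A(X;\eta_2)-A(X;\eta_1)=X(\eta_2-\eta_1)>0$ on $\{X>0\}$: under the paper's standing regularity ($f_X$ positive on a subinterval of $(0,P_H]$ and $F_\Gamma$ strictly increasing on $[0,\infty)$), one can carve out a positive-probability set of $(X,\Gamma)$ pairs on which $\Gamma A(X;\eta_1)<\tau\leq \Gamma A(X;\eta_2)$. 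The argument for $P_c$ is identical after swapping roles.

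The main obstacle is this strictness step: monotonicity of $P_{out}$ itself is a clean inclusion-of-events argument, but passing to $P_{out}^*$ requires excluding the non-generic possibility that at the optimal save-ratio the two level sets agree up to a null set. Ruling this out needs the continuity/support non-degeneracy above; once that is in hand, both claims of the proposition follow immediately.
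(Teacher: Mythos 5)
Your reformulation of the outage event as the single level set $\{\Gamma\cdot A(X)<\tau(\rho)\}$ is correct (up to the null set $\{A(X)=0\}$) and gives the first claim --- the non-strict monotonicity of $P_{out}$ in $\eta$ and $P_c$ --- by a clean inclusion-of-events argument. This is genuinely different from, and arguably tidier than, the paper's route, which differentiates $\hat{P}_{out}$ under the integral sign via the Fundamental Theorem of Calculus and then runs a three-way case analysis on whether $\eta$ falls above or below the degeneracy threshold $\frac{P_c}{P_H}-\frac{\rho}{1-\rho}$; your version absorbs the $P_{out}=1$ branch automatically.

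However, the reduction you use for the strict monotonicity of $P_{out}^*$ does not close. You fix $\rho_2^*$, the minimizer for $\eta_2$, and obtain $P_{out}^*(\eta_1,P_c)\leq P_{out}(\rho_2^*,\eta_1,P_c)$ together with $P_{out}^*(\eta_2,P_c)=P_{out}(\rho_2^*,\eta_2,P_c)$, then claim it suffices to prove $P_{out}(\rho_2^*,\eta_1,P_c)>P_{out}(\rho_2^*,\eta_2,P_c)$. But from $a\leq b$ and $b>c$ you cannot conclude $a>c$: the strict gap you establish at $\rho_2^*$ sits on the wrong side of the inequality bounding $P_{out}^*(\eta_1,P_c)$, which could a priori drop all the way down to $P_{out}^*(\eta_2,P_c)$ at some other save-ratio. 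The correct reduction --- and the one the paper uses --- evaluates at $\rho_1^*$, the minimizer for $\eta_1$, giving
\begin{equation}
P_{out}^*(\eta_1,P_c)=P_{out}(\rho_1^*,\eta_1,P_c)>P_{out}(\rho_1^*,\eta_2,P_c)\geq P_{out}^*(\eta_2,P_c), \nonumber
\end{equation}
where the equality at the left is what lets the pointwise strict inequality transfer to the optima. Your positive-probability ``carving'' argument for pointwise strictness is fine and can be reused verbatim at $\rho_1^*$ (noting, as the paper does, that the minimizer must lie in the region where $\phi(\cdot)<P_H$ so that the outage there is not identically $1$), so the fix is one line --- but as written the chain of inequalities is broken.
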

\begin{proof}
Please refer to Appendix \ref{appendix:proof 0}.
\end{proof}

The intuition of Proposition $\ref{proposition:1}$ is clear: If
$\eta$ grows, the energy available to the transmitter can only grow
or remain the same, whatever the values of $\rho$ and $P_c$, hence
$P_{out}$ must be non-increasing with $\eta$; if $P_c$ grows, the
energy available for transmission decreases, leading to higher
$P_{out}$.

\subsection{Ideal System: $\eta = 1$ and $P_c = 0$}\label{sec:idealsystem}
Suppose that circuit power is negligible, i.e.\ all the energy is
spent on transmission, and the SESD has perfect energy-transfer
efficiency. The condition $P_H > P_c/(\frac{\rho}{1-\rho} + \eta)$
is always satisfied, and problem (P1) is simplified to
\begin{align}
\mathrm{(P2)}:~\mathop{\mathtt{min.}}_{0 \leq \rho \leq 1} &
~~ \int_0^{P_H}f_X(x)F_{\Gamma}\left[\frac{(2^{\frac{Q}{(1-\rho)T}}-1)(1-\rho)}{x}\right]dx \nonumber
\end{align}
where the optimal value of (P2) is denoted as $P_{out}^*(1,0)$, and the optimal save-ratio is denoted as $\rho^*(1,0)$.

\begin{lemma}\label{lemma:1}
The minimum outage probability when $\eta = 1$ and $P_c = 0$ is given by
\begin{equation}\label{eq:ideal minimum outage}
P_{out}^*(1,0) = \int_0^{P_H}f_X(x)F_{\Gamma}\left[\frac{2^{Q/T}-1}{x}\right]dx
\end{equation}
and is achieved with the save-ratio $\rho^*(1,0) = 0$.
\end{lemma}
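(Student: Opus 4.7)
The plan is to show that the integrand of (P2) is pointwise non-decreasing in $\rho$ on $[0,1)$ for each fixed $x \in (0, P_H]$, which, by monotonicity of integration, forces the integral itself to attain its minimum at $\rho=0$. The explicit value $P_{out}^*(1,0)$ then follows by direct substitution.

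The key quantity to study is the argument of $F_\Gamma$, namely
\begin{equation*}
\psi(\rho, x) = \frac{(2^{Q/((1-\rho)T)}-1)(1-\rho)}{x}.
\end{equation*}
Since $F_\Gamma$ is a CDF and therefore non-decreasing, it suffices to prove that $\psi(\rho, x)$ is non-decreasing in $\rho$ on $[0,1)$ for every fixed $x>0$. I would make the change of variables $u = 1-\rho \in (0,1]$ and write $\psi = g(u)/x$ where $g(u) = u(2^{Q/(uT)}-1)$; monotonicity of $\psi$ in $\rho$ is equivalent to $g$ being non-increasing in $u$ on $(0,1]$.

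To establish that, I would substitute once more, setting $v = Q/(uT)$, so that $g(u) = \tfrac{Q}{T}\cdot\tfrac{2^v - 1}{v}$. The problem reduces to the well-known fact that $h(v) = (2^v-1)/v$ is strictly increasing on $(0,\infty)$. I would verify this by computing $h'(v) = v^{-2}\bigl(2^v(v\ln 2 - 1)+1\bigr)$, noting that the bracketed expression vanishes at $v=0$ and has derivative $v(\ln 2)^2\, 2^v \ge 0$, hence is itself non-negative for all $v\ge 0$. Since $v$ is a strictly decreasing function of $u$, $g(u)$ is strictly decreasing in $u$, i.e., $\psi(\rho, x)$ is strictly increasing in $\rho$.

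With pointwise monotonicity in hand, the conclusion is immediate: for every $\rho \in (0,1)$, the integrand $f_X(x)F_\Gamma[\psi(\rho,x)]$ dominates $f_X(x)F_\Gamma[\psi(0,x)]$, so integrating over $x \in [0,P_H]$ shows that the objective of (P2) is minimized at $\rho^*(1,0)=0$. Substituting $\rho=0$ into (P2) and noting that the condition $P_H > \phi(\cdot)$ in (\ref{eq:outage definition}) is automatic here (since $P_c=0$) yields the closed-form expression (\ref{eq:ideal minimum outage}). The only nontrivial step is proving monotonicity of $h(v)=(2^v-1)/v$; everything else is an elementary change of variables and appeal to the monotonicity of $F_\Gamma$.
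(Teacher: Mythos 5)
Your proposal is correct and follows essentially the same route as the paper: both reduce the problem to showing that the argument of $F_\Gamma$, i.e.\ $g(\rho) = \left(2^{Q/((1-\rho)T)}-1\right)(1-\rho)$, is increasing in $\rho$, and both ultimately rest on the same elementary inequality $2^{s}(s\ln 2 - 1) + 1 \geq 0$ for $s \geq 0$ (the paper obtains it as $g'(\rho)\geq g'(0)>0$ via $g''>0$, while you obtain it as the numerator of $h'(v)$ for $h(v)=(2^v-1)/v$ after a change of variables). The substitution $v = Q/((1-\rho)T)$ is a slightly cleaner packaging, but the content of the argument is identical.
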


\begin{proof}
Please refer to Appendix \ref{appendix:proof 1}.
\end{proof}

Lemma \ref{lemma:1} indicates that the optimal strategy for a
transmitter that uses no power to operate its circuitry powered by
two ESDs with 100 percent efficiency, is to transmit
continuously.\footnote{Except for the time needed in each slot to
transfer energy from the SESD to the MESD, which we assume to be
negligible.} This is not surprising because the SESD collects energy
from the environment just as efficiently as the MESD does, and so
idling the transmitter while the MESD harvests energy wastes
transmission resources (time) while not reaping any gains (energy
harvested). However, we will see that this is only true when there
is no circuit power and the battery efficiency is perfect.

\subsection{Inefficient Battery: $\eta < 1$ and $P_c = 0$} \label{sec:inefficient battery}
When the SESD energy transfer efficiency $\eta < 1$ and $P_c = 0$, (P1) becomes
\begin{align}
\mathrm{(P3)}:~\mathop{\mathtt{min.}}_{0 \leq \rho \leq 1} &
~~ \int_0^{P_H}f_X(x)F_{\Gamma}\left[\frac{(2^{\frac{Q}{(1-\rho)T}}-1)}{x(\frac{\rho}{1-\rho}+\eta)}\right]dx \nonumber
\end{align}
where the optimal value of (P3) is denoted as $P_{out}^*(\eta,0)$, and the optimal save-ratio is denoted as $\rho^*(\eta,0)$.

\begin{lemma}\label{lemma:2}
When SESD energy transfer efficiency $\eta < 1$ and circuit power $P_c = 0$, the optimal save-ratio $\rho$ has the following properties.
\begin{enumerate}
\item A ``phase transition'' behavior:
\begin{align}\label{eq:phase transition}
\left\{\begin{array}{ll}
\rho^*(\eta, 0) = 0, ~~~~ & \eta \in \left[\frac{2^{\frac{Q}{T}}-1}{2^{\frac{Q}{T}}(\ln2)\frac{Q}{T}}, 1 \right) \\
\rho^*(\eta, 0) > 0, ~~~~ & \eta \in \left[0, \frac{2^{\frac{Q}{T}}-1}{2^{\frac{Q}{T}}(\ln2)\frac{Q}{T}} \right)
\end{array}
\right.
\end{align}
\item $\rho^*(\eta,0)$ is a non-increasing function of $\eta$, for $0 \leq \eta \leq 1$.
\end{enumerate}
\end{lemma}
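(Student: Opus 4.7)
The plan is to reduce (P3) to a one-dimensional problem in which the integration variable $x$ drops out. Writing the argument of $F_\Gamma$ as $g(\rho,\eta,x) = \psi(\rho;\eta)/x$ with
\[
\psi(\rho;\eta) := \frac{2^{Q/((1-\rho)T)}-1}{\tfrac{\rho}{1-\rho}+\eta},
\]
the factor $\psi(\rho;\eta)$ is independent of $x$. Since $F_\Gamma$ is non-decreasing, any $\rho$ that minimizes $\psi(\cdot;\eta)$ on $[0,1)$ minimizes $F_\Gamma[\psi(\rho;\eta)/x]$ pointwise in $x>0$, and therefore also minimizes the full integral in (P3). Hence $\rho^*(\eta,0) = \arg\min_{\rho\in[0,1)}\psi(\rho;\eta)$, and we only have to study the one-variable function $\psi(\cdot;\eta)$.

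Next, identify the sign of $\psi'(\rho;\eta)$. After the substitution $a = 1/(1-\rho)$, which maps $\rho\in[0,1)$ to $a\in[1,\infty)$, and setting $k := (\ln 2)(Q/T)$, a direct quotient-rule computation shows that $\psi'(\rho;\eta)$ has the same sign as
\[
N(a;\eta) := e^{ka} k\,(a-1+\eta) - \bigl(e^{ka}-1\bigr).
\]
Crucially, $\partial N/\partial a = e^{ka}k^2(a-1+\eta) \geq 0$ for all $a\geq 1$ and $\eta\geq 0$, and $N(a;\eta)\to +\infty$ as $a\to\infty$. So $\psi(\cdot;\eta)$ is either non-decreasing on $[0,1)$ or strictly unimodal with a unique interior minimizer, according to the sign of $N(1;\eta) = e^{k}k\eta - (e^{k}-1)$.

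Part~1 is then immediate: $N(1;\eta)\geq 0$ exactly when $\eta \geq (e^k-1)/(e^k k) = (2^{Q/T}-1)/\bigl(2^{Q/T}(\ln 2)(Q/T)\bigr)$. In this regime $\psi$ is non-decreasing in $\rho$, forcing $\rho^*(\eta,0)=0$. Below the threshold, $N(1;\eta)<0$ while $N$ increases to $+\infty$, so a unique $a^*>1$ with $N(a^*;\eta)=0$ exists and is a strict minimum of $\psi$, giving $\rho^*(\eta,0) = 1 - 1/a^* > 0$.

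For Part~2, $\rho^*$ is identically zero on the super-threshold region, so monotonicity is trivial there. On the sub-threshold region, solving $N(a^*;\eta)=0$ for $\eta$ yields the explicit inverse
\[
\eta(a) \;=\; \frac{1-e^{-ka}}{k} - (a-1), \qquad \eta'(a) = e^{-ka}-1 < 0 \ \text{for } a>0,
\]
so $\eta(\cdot)$ is strictly decreasing on $[1,\infty)$; hence $a^*$ is strictly decreasing in $\eta$ and $\rho^*=1-1/a^*$ is strictly decreasing in $\eta$. Continuity at the threshold (where $a^*\to 1$, $\rho^*\to 0$) then gives the claimed non-increasing behavior globally on $[0,1]$. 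The main conceptual step in this plan is the pointwise-minimizer reduction of the first paragraph; once that is in hand, everything reduces to one-variable calculus on $N(\cdot;\eta)$, and I anticipate no further obstacle beyond bookkeeping.
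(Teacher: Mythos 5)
Your proof is correct and follows essentially the same route as the paper's: both reduce (P3) to minimizing the scalar factor $\psi(\rho;\eta)$ inside $F_\Gamma$ (legitimate since $F_\Gamma$ is non-decreasing and $f_X \geq 0$), and both locate the phase transition by showing that the numerator of the derivative --- your $N(a;\eta)$ is exactly the paper's $u(\rho)$ after the substitution $a = 1/(1-\rho)$ --- is increasing and changes sign at most once, with the threshold determined by its value at $\rho = 0$. Your Part 2 (inverting the stationarity condition to obtain $\eta(a)$ and differentiating) is just a more explicit rendering of the paper's monotone-comparison argument, which uses that $u$ is increasing in both $\rho$ and $\eta$.
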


\begin{proof}
Please refer to Appendix \ref{appendix:proof 2}.
\end{proof}

According to (\ref{eq:phase transition}), if the SESD efficiency is
above a threshold, the increased energy available to the transmitter
if the MESD rather than the SESD collects energy over $[0,\rho T]$
is not sufficient to overcome the extra energy required to transmit
at the higher rate $R_{\mathrm{eff}}$ over $(\rho T, T]$. The result
is that the optimal $\rho$ is 0. On the other hand, if $\eta$ is
below that threshold, then some amount of time should be spent
harvesting energy using the higher-efficiency MESD even at the
expense of losing transmission time. Lemma \ref{lemma:2} quantifies
precisely the interplay among $\eta$, $Q$, $T$ and $\rho$.

We should note here that even though we consider the case of having
two ESD's, by setting $\eta = 0$, we effectively remove the SESD and
hence our analysis applies also to the single-ESD case. According to
(\ref{eq:phase transition}), if we only have one ESD, the optimal
save-ratio is $\rho^*(0,0)$, which is always larger than $0$. This
is intuitively sensible, because with only one ESD obeying the
energy half-duplex constraint, it would be impossible to transmit
all the time ($\rho = 0$) because that would leave no time at all
for energy harvesting.

\subsection{Non-Zero Circuit Power: $\eta \leq 1$, $P_c > 0$}\label{sec:finitepc}
Non-zero circuit power $P_c$ leads to two mutually exclusive
effects: (i) inability to power on the transmitter for the
$(1-\rho)T$ duration of transmission -- this is when $P_H <
\phi(\cdot)$ in (\ref{eq:outage definition}); and (ii) higher outage
probability if $P_H > \phi(\cdot)$ because some power is devoted to
running the hardware.

Since $\frac{P_c}{\frac{\rho}{1-\rho}+\eta}$ decreases as $\rho$
increases, its maximum value is $\frac{P_c}{\eta}$. Therefore, if
$P_H > \frac{P_c}{\eta}$, the transmitter would be able to recover
enough energy (with non-zero probability) to power on the
transmitter, i.e.\ $\rho \in [0, 1)$. If $P_H \leq
\frac{P_c}{\eta}$, by condition $P_H \leq
\frac{P_c}{\frac{\rho}{1-\rho}+\eta}$, save-ratio $\rho$ is required
to be larger than
$\frac{\frac{P_c}{P_H}-\eta}{1-\eta+\frac{P_c}{P_H}}$. In summary,
\begin{itemize}
\item If $P_c < P_H\eta$
\begin{equation}
P_{out} = \hat{P}_{out}(\rho, \eta, P_c), ~~~~ \forall \rho \in [0, 1) \nonumber
\end{equation}

\item If $P_c \geq P_H\eta$
\begin{align}\label{eq:outage seperation rho}
P_{out} = \left\{\begin{array}{ll}
1, ~~~~ & \rho \leq \frac{\frac{P_c}{P_H}-\eta}{1-\eta+\frac{P_c}{P_H}} \\
\hat{P}_{out}(\rho, \eta, P_c), ~~~~ & \rho > \frac{\frac{P_c}{P_H}-\eta}{1-\eta+\frac{P_c}{P_H}}
\end{array}
\right.
\end{align}
\end{itemize}

If $P_c \geq \eta P_H$, referring to (\ref{eq:outage seperation
rho}), we may conclude that $\rho^*(\eta, P_c) >
\frac{\frac{P_c}{P_H}-\eta}{1-\eta+\frac{P_c}{P_H}}$ due to the need to offset circuit power consumption. If $P_c < \eta P_H$,
theoretically, the transmitter is able to recover enough energy
(with non-zero probability for all $\rho \in [0, 1)$) to transmit.

\begin{lemma}\label{lemma:3}
For an energy harvesting transmitter with battery efficiency $\eta$ and non-zero circuit power $P_c$,
\begin{equation}\label{eq:circuit power battery inefficient}
 \eta - \frac{P_c}{P_H} < \frac{2^{\frac{Q}{T}} - 1}{2^{\frac{Q}{T}}(\ln2)\frac{Q}{T}} \;\;\Longrightarrow\;\;\rho^*(\eta, P_c) > 0.
\end{equation}
\end{lemma}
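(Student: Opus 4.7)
The plan, paralleling Lemma~\ref{lemma:2}, is to show that the hypothesis forces $\rho = 0$ to be sub-optimal for (P1). I would split on whether $\rho = 0$ is feasible, i.e., whether $P_c < \eta P_H$.

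\emph{Case A: $P_c \geq \eta P_H$.} At $\rho = 0$, $\phi(0,\eta,P_c) = P_c/\eta \geq P_H$, so the second branch of (\ref{eq:outage definition}) would give $P_{out}(0,\eta,P_c) = 1$. Any $\rho$ slightly above the threshold $\tfrac{P_c/P_H - \eta}{1-\eta + P_c/P_H}$ appearing in (\ref{eq:outage seperation rho}) instead makes $P_H > \phi(\cdot)$, and for any non-degenerate fading distribution $\hat{P}_{out}(\rho,\eta,P_c) < 1$. Hence any minimizer satisfies $\rho^*(\eta,P_c) > 0$. Note that the hypothesis of the lemma is automatic in this case since $\eta - P_c/P_H \leq 0$.

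\emph{Case B: $P_c < \eta P_H$.} Here $\rho = 0$ is feasible, and the aim is to show $\left.\tfrac{\partial P_{out}}{\partial \rho}\right|_{\rho=0} < 0$. Differentiating (\ref{eq:simplified outage}) by the Leibniz rule would produce three contributions: $f_X(\phi)\phi'(\rho)$ from $F_X(\phi)$, a boundary term $-f_X(\phi)\,F_\Gamma(g)|_{x=\phi}\,\phi'(\rho)$ from the moving lower limit, and an interior term. The key observation is that as $x \to \phi^+$ the denominator $x[\tfrac{\rho}{1-\rho}+\eta] - P_c$ of $g$ tends to $0^+$, so $g \to \infty$ and $F_\Gamma(g)|_{x=\phi} = 1$; the two boundary contributions cancel, leaving only the interior term. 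A direct computation then gives
\begin{equation*}
\left.\frac{\partial g}{\partial \rho}\right|_{\rho=0} = \frac{A\,x - 2^{Q/T}(\ln 2)(Q/T)\,P_c}{(x\eta - P_c)^2}, \quad A := 2^{Q/T}(\ln 2)(Q/T)\,\eta - (2^{Q/T}-1).
\end{equation*}
If $A \leq 0$, the numerator is strictly negative on $(P_c/\eta,P_H]$ because $P_c > 0$. If $A > 0$, the numerator is affine-increasing in $x$, so its maximum over the range occurs at $x = P_H$; rearranging ``numerator at $P_H$ is negative'' yields exactly $\eta - P_c/P_H < \tfrac{2^{Q/T}-1}{2^{Q/T}(\ln 2)(Q/T)}$, which is the stated hypothesis. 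Either way the integrand is strictly negative on a set of positive $f_X$-measure, so the derivative is strictly negative and $\rho = 0$ cannot minimize $P_{out}$.

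\emph{Anticipated obstacle.} The subtle step is the Leibniz computation in Case B: a naive bookkeeping would leave a positive $f_X(\phi)\phi'(\rho)$ boundary term whose sign is opposite to the one needed, and everything hinges on recognizing that $g$ blows up at the lower limit so that this spurious term is cancelled by the contribution from differentiating the lower endpoint of the integral. After that cancellation, the remaining calculation is elementary, and the satisfying feature is that the pointwise sign condition, once optimized over $x \in (P_c/\eta, P_H]$, collapses exactly to the hypothesis of Lemma~\ref{lemma:3}.
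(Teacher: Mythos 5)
Your proposal is correct and follows essentially the same route as the paper: the $P_c \geq \eta P_H$ case is dispatched via the feasibility threshold $\rho > \frac{P_c/P_H-\eta}{1-\eta+P_c/P_H}$, and the $P_c < \eta P_H$ case reduces, after the same boundary-term cancellation (which the paper establishes in the proof of Proposition \ref{proposition:1}), to the sign of $\partial g/\partial\rho$ at $\rho=0$, whose numerator is increasing in $x$ so that negativity at $x=P_H$ recovers exactly the stated threshold. Your $A\le 0$ versus $A>0$ split is just an explicit unpacking of the paper's observation that $v(0)$ is increasing in $x$.
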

\begin{proof}
Please refer to Appendix \ref{appendix:proof 3}.
\end{proof}

Intuitively, the smaller the circuit power, the more energy we can
spend on transmission; the larger the battery efficiency is, the
more energy we can recover from energy harvesting. Small circuit
power and high battery efficiency suggest continuous transmission
($\rho^*(\eta, P_c) = 0$), which is consistent with our intuition.
According to Lemma \ref{lemma:3}, larger circuit power may be
compensated by larger ESD efficiency (when the threshold for $\eta$
is smaller than 1). A non-zero save-ratio is only desired if there
exists significant circuit power to be offset or substantial ESD
inefficiency to be compensated. The threshold depends on required
transmission rate.

\begin{remark}
It is worth noticing that if  we ignore the battery inefficiency or set $\eta = 1$, Lemma \ref{lemma:3} could be simplified as
\begin{align}\label{eq:circuit power}
P_c > \frac{2^{\frac{Q}{T}}(\ln2)\frac{Q}{T} - 2^{\frac{Q}{T}} +
1}{2^{\frac{Q}{T}}(\ln2)\frac{Q}{T}}P_H \;\;\Longrightarrow\;\;
\rho^*(1, P_c) > 0
\end{align}
where only circuit power $P_c$ impacts the save-ratio.
Since the MESD and the SESD are equivalent ($\eta=1$), harvesting energy using the MESD is not
the reason for delaying transmission. Instead, $\rho^* > 0$ when $P_c$ is so large that we should transmit over a shorter interval at a higher power, so that the actual transmission power minimizes $P_{out}$. Circuit power similarly determined the fundamental
tradeoff between energy efficiency and spectral efficiency (data
rate) in \cite{miao2010}, in which it was shown that with additional
circuit power making use of all available time for transmission is
not the best strategy in terms of both energy and spectral
efficiency. In this paper, outage is minimized through utilizing
available (random) energy efficiently, wherein circuit power causes
a similar effect.
\end{remark}

\section{Diversity Analysis}\label{sec:diversity}
The outage performance of wireless transmission over fading channels
at high SNR can be conveniently characterized by the so-called {\it
diversity order} \cite{Tse}, which is the high-SNR slope of the
outage probability determined from a SNR-outage plot in the log-log
scale. Mathematically, the diversity order is defined as
\begin{align}\label{eq:diversity_definition}
d = - \lim_{\bar{\gamma} \rightarrow \infty} \frac{\log_{10}(P_{out})}{\log_{10}(\bar{\gamma})}
\end{align}
where $P_{out}$ is the outage probability and $\bar{\gamma}$ is the
average SNR.

Diversity order under various fading channel conditions has been
comprehensively analyzed in the literature (see e.g. \cite{Tse} and
references therein). Generally speaking, if the fading channel power
distribution has an accumulated density near zero that can be
approximated by a polynomial term, i.e., $\mbox{Pr}\left(|h|^2 \leq
\epsilon \right) \approx \epsilon^k$, where $\epsilon$ is an
arbitrary small positive constant, then the constant $k$ indicates
the diversity order of the fading channel. For example, in the case
of Rayleigh fading channel with $\mbox{Pr}\left(|h|^2 \leq \epsilon
\right)\approx \epsilon$, the diversity order is thus 1 according to
(\ref{eq:diversity_definition}).

However, the above diversity analysis is only applicable to
conventional wireless systems in which the transmitter has a
constant power supply. Since energy harvesting results in random
power availability in addition to fading channels, the PDF of the
receiver SNR due to both random transmit power and random channel
power may not necessarily be polynomially smooth at the origin (as
we will show later). As a result, the conventional diversity
analysis with constant transmit power cannot be directly applied. In
this section, we will investigate the effect of random power on
diversity analysis, as compared with the
conventional constant-power case. For clarity, in the rest of this
section, we consider the ideal system with $\eta = 1$ and $P_c = 0$,
and the Rayleigh fading channel with $\mathbb{E}[\Gamma] =
\mathbb{E}[\frac{|h|^2}{\sigma_n^2}] =
\frac{\sigma_{h}^2}{\sigma_{n}^2} = \lambda_{\gamma}$.

From (\ref{eq:channel outage}) and (\ref{eq:outage definition}), the outage probability
when $\eta = 1$ and $P_c = 0$ is given by
\begin{align}
P_{out} = \mbox{Pr}\left\{\log_2(1+P\Gamma) < \frac{Q}{(1-\rho)T} \right\}.
\end{align}
Based on Lemma \ref{lemma:1}, the minimum outage
probability is achieved with the save-ratio $\rho = 0$. Therefore,
the outage probability is simplified as\footnote{For convenience,
$P_{out}^{*}$ is used to represent $P_{out}^*(1,0)$ in the rest of
this section.}
\begin{align}\label{eq:outage for diversity analysis}
P_{out}^{*} = \mbox{Pr}\left\{P\Gamma < C\right\} = \int_0^{\infty}\int_0^{\frac{C}{P}}f_P(p)f_{\Gamma}(\gamma)d\gamma dp
\end{align}
where $C = 2^{\frac{Q}{T}} - 1$ and the last equality comes from the
assumption of Rayleigh fading channel so the $\Gamma$ is exponential distributed. It is worth noting that in
this case with $\eta = 1$ and $\rho = 0$, according to
(\ref{eq:buffered energy}), the energy arrival rate $X$ and the
average total power $P$ are identical and thus can be used
interchangeably.

Clearly, the near-zero behavior of $P_{out}^{*}$ critically depends
on the PDF of random power $f_P(p)$, while intuitively we should
expect that random power can only degrade the outage performance. We choose to use the Gamma
distribution to model the random power $P$, because the Gamma distribution models
many positive random variables (RVs)
\cite{Springer, Zhang2008}. The Gamma distribution is very general,
including exponential, Rayleigh, and Chi-Square as special cases;
furthermore, the PDF of any positive continuous RV can be properly
approximated by the sum of Gamma PDFs. Supposing that $P$ follows a Gamma distribution denoted by $P
\sim \mathcal{G}(\beta, \lambda_p)$, then its PDF is given by
\begin{align}\label{eq:Gamma distribution}
f_P(p) = \frac{p^{\beta - 1}\mbox{exp}\left(-\frac{p}{\lambda_p}\right)}{\lambda_p^{\beta}\Gamma(\beta)}U(p)
\end{align}
where $U(\cdot)$ is the unit step function, $\Gamma(\cdot)$ is the
Gamma function, and $\beta>0$, $\lambda_p>0$ are given parameters.
Referring to \cite[Lemma 2]{Nadarajah}, which gives the distribution of the product of $m$ Gamma RVs, the outage probability in
(\ref{eq:outage for diversity analysis}) can be computed as
\begin{align}\label{eq:outage meijerG}
P_{out}^{*} = \frac{1}{\Gamma(\beta)}G_{13}^{21}\left(\frac{C\lambda_{\gamma}}{\lambda_p}\left|\begin{array}{ll}
1 \\
1, \beta, 0
\end{array}\right.
\right)
\end{align}
where $G(\cdot)$ is the Meijer G-function \cite{Springer}.

Meijer G-function can in general only be numerically evaluated and does not give much insights about
how random power affects the outage performance. Next, we further
assume that the random power $P$ is exponentially distributed (as a
special case of Gamma distribution with $\beta = 1$) to demonstrate
the effect of random power.
\begin{lemma}\label{lemma:4}
Suppose that $P$ is exponentially distributed with mean $\lambda_p$,
the channel is Rayleigh fading with $\mathbb{E}[\Gamma] =
\frac{\sigma_{h}^2}{\sigma_{n}^2} = \lambda_{\gamma}$, and thus the
average received SNR $\bar{\gamma} = \lambda_p\lambda_{\gamma} =
\frac{\lambda_p\sigma_h^2}{\sigma_n^2}$. The minimum outage
probability $P_{out}^{*}$, under an ideal system with $\eta = 1$ and
$P_c = 0$, is given by
\begin{align}\label{eq:outage seriel representation}
P_{out}^{*} = \sum_{k=0}^{\infty}\frac{C^{k+1}}{(k!)^2(k+1)\bar{\gamma}^{k+1}}\left[\frac{1}{k+1} - \ln \frac{C}{\bar{\gamma}} + 2\psi(k+1)\right]
\end{align}
where $\psi(x)$ is the digamma function \cite{Abramowitz} and
$\ln\left(\cdot\right)$ represents the natural logarithm.
\end{lemma}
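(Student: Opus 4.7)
The plan is to reduce the double integral for $P_{out}^{*}$ to a single integral in closed form using a standard Bessel-function identity, and then expand that Bessel function in a power series around zero to read off the stated formula.

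First I would substitute the exponential PDFs $f_P(p) = \lambda_p^{-1}\exp(-p/\lambda_p)$ and $f_\Gamma(\gamma)=\lambda_\gamma^{-1}\exp(-\gamma/\lambda_\gamma)$ into (\ref{eq:outage for diversity analysis}) and carry out the inner integral over $\gamma$, which gives
\begin{equation}
P_{out}^{*} = 1 - \frac{1}{\lambda_p}\int_0^{\infty}\exp\!\left(-\frac{p}{\lambda_p} - \frac{C}{\lambda_\gamma\, p}\right)dp. \nonumber
\end{equation}
Using the well-known identity $\int_0^\infty \exp(-ap - b/p)\,dp = 2\sqrt{b/a}\,K_1(2\sqrt{ab})$ with $a = 1/\lambda_p$, $b=C/\lambda_\gamma$, together with $\bar\gamma = \lambda_p\lambda_\gamma$, this collapses to
\begin{equation}
P_{out}^{*} = 1 - 2\sqrt{C/\bar\gamma}\, K_1\!\left(2\sqrt{C/\bar\gamma}\right). \nonumber
\end{equation}

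Next I would invoke the standard series expansion of $K_1(z)$ about $z=0$ (e.g.\ NIST DLMF 10.31.1, or \cite{Abramowitz}),
\begin{equation}
K_1(z) = \frac{1}{z} + \ln(z/2)\,I_1(z) - \frac{z}{4}\sum_{k=0}^{\infty}\bigl[\psi(k+1)+\psi(k+2)\bigr]\frac{(z^2/4)^k}{k!\,(k+1)!}, \nonumber
\end{equation}
combined with $I_1(z) = (z/2)\sum_{k\ge 0}(z^2/4)^k/[k!(k+1)!]$. Plugging $z=2\sqrt{C/\bar\gamma}$ (so $z^2/4 = C/\bar\gamma$) into $zK_1(z)$ kills the $1/z$ term against the constant $1$ in $P_{out}^{*} = 1 - zK_1(z)$, and collects the remaining contributions into a single series with common factor $(C/\bar\gamma)^{k+1}/[k!(k+1)!]$.

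The remaining algebra is a matter of rewriting. Using the two identities $\frac{1}{k!(k+1)!} = \frac{1}{(k!)^2(k+1)}$ and the recurrence $\psi(k+2) = \psi(k+1) + \frac{1}{k+1}$, the bracket $[-\ln(C/\bar\gamma) + \psi(k+1) + \psi(k+2)]$ converts into $[\frac{1}{k+1} - \ln(C/\bar\gamma) + 2\psi(k+1)]$, matching (\ref{eq:outage seriel representation}) exactly.

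I expect the main obstacle to be purely mechanical: getting the signs and the $\ln(z/2)$ vs.\ $\ln(C/\bar\gamma)$ conversion right when regrouping the logarithmic and polygamma contributions (a factor of $2$ appears because $\ln(z/2) = \frac{1}{2}\ln(C/\bar\gamma)$ and $z^2 = 4C/\bar\gamma$). No other analytic obstacle is anticipated, since the Bessel integral identity and the $K_1$ series expansion are standard.
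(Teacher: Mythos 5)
Your proposal is correct, and the computations check out: the inner integral over $\gamma$ and the identity from \cite[\S3.324.1]{II1980} do give $P_{out}^{*}=1-zK_1(z)$ with $z=2\sqrt{C/\bar{\gamma}}$, the quoted ascending series for $K_1$ is the standard one, and the bookkeeping with $\ln(z/2)=\tfrac{1}{2}\ln(C/\bar{\gamma})$, $\tfrac{1}{k!(k+1)!}=\tfrac{1}{(k!)^2(k+1)}$ and $\psi(k+2)=\psi(k+1)+\tfrac{1}{k+1}$ reproduces (\ref{eq:outage seriel representation}) exactly. The paper's proof coincides with yours up to the closed form $F_Z(C)=1-2\sqrt{C/\bar{\gamma}}\,K_1\bigl(2\sqrt{C/\bar{\gamma}}\bigr)$ (obtained from the same Bessel integral identity), but then takes a detour you avoid: it differentiates $F_Z$ to get the density $f_Z(z)=\frac{2}{\lambda_p\lambda_\gamma}K_0\bigl(2\sqrt{z/(\lambda_p\lambda_\gamma)}\bigr)$, expands $K_0$ via \cite[\S8.447.3]{II1980}, and integrates term by term, which requires evaluating $\int_0^D x^k\ln x\,dx$ and handling the limit $x\ln x\to 0$. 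Your route is more direct -- one series expansion of $zK_1(z)$ in place of a differentiate-then-integrate cycle -- at the cost of invoking the slightly more involved $K_1$ series (with its $1/z$ singular term and the two-digamma bracket) rather than the simpler $K_0$ series. Both are complete and rigorous; the only thing I would ask you to make explicit is the justification for term-by-term manipulation (absolute convergence of the ascending series for $|z|<\infty$), which the paper also leaves implicit.
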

\begin{proof}
Please refer to Appendix \ref{appendix:proof 4}.
\end{proof}

In the asymptotically high-SNR\footnote{We assume that high SNR is achieved via decreasing noise power $\sigma_{n}^2$, while fixing the average harvested energy.} regime, we can approximate
$P_{out}^{*}$ by taking only the first term of (\ref{eq:outage
seriel representation}) as
\begin{align}\label{eq:outage appximation}
P_{out}^{*} & \approx \frac{C}{\bar{\gamma}}\left(1 - \ln{\frac{C}{\bar{\gamma}}} + 2\psi(1)\right) \approx \frac{\ln{\bar{\gamma}}}{\bar{\gamma}}.
\end{align}
As observed, $P_{out}^{*}$ decays as
$\bar{\gamma}^{-1}\ln{(\bar{\gamma})}$ rather than
$\bar{\gamma}^{-1}$ as in the conventional case with constant power,
which indicates that the PDF of the receiver SNR is no longer
polynomially smooth near the origin. Hence, the slope of
$P_{out}^{*}$ in the SNR-outage plot, or the diversity order, will
converge much more slowly to $\bar{\gamma}^{-1}$ with SNR than in the constant-power case,
suggesting that random energy arrival has a significant impact on
the diversity performance. More specifically, we obtain the
diversity order in the case of exponentially distributed random
power as
\begin{align} d  = -
\lim_{\bar{\gamma} \rightarrow \infty} \frac{-
\log_{10}{\bar{\gamma}} +
\log_{10}{\left(\ln{\bar{\gamma}}\right)}}{\log_{10}{\bar{\gamma}}}
= 1
\end{align}
which is, in principle, the same as that over the Rayleigh fading
channel with constant power. We thus conclude that diversity order
may not be as meaningful a metric of
evaluating outage performance in the presence of random power, as in the conventional case of constant power.

\section{Multiple Transmitters}\label{sec:multiple transmitter}
In this section, we extend the ST protocol for the single-channel
case to the more practical case of multiple transmitters in a
wireless network, and quantify the system-level outage performance
as a function of the number of transmitters in the network.

\subsection{TDMA-ST}\label{sec:TDMAST}
We consider a wireless network with $N$ transmitters, each of which
needs to transmit $Q$ bits of data within a time frame of duration
$T$ seconds to a common fusion center (FC). It is assumed that each
transmitter is powered by the same energy harvesting circuit model
as shown in Fig. \ref{fig:Circuitmodel}, and transmits over the
baseband-equivalent channel model given in (\ref{eq:single channel
model}). We also assume a homogeneous system setup, in which the
channel gains, energy harvesting rates or additive noises for all
transmitter-FC links are independent and identically distributed (i.i.d).

\begin{figure}
\centering
\epsfxsize=0.7\linewidth
    \includegraphics[width=8cm]{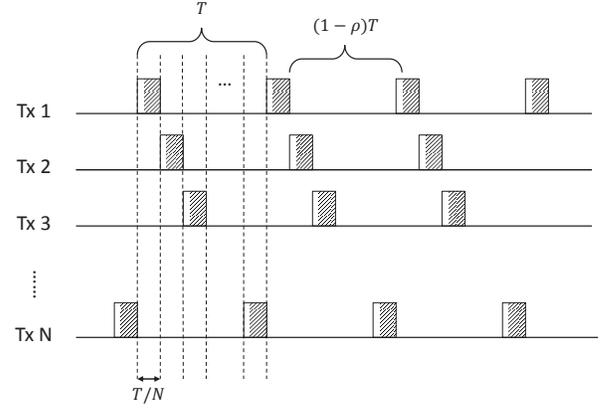}
\caption{TDMA based ST (TDMA-ST)} \label{fig:Multiuser}
\end{figure}

In order to allow multiple transmitters to communicate with
the FC, we propose a TDMA based ST (TDMA-ST) protocol as follows
(cf. Fig. \ref{fig:Multiuser}):
\begin{itemize}
\item Every frame is equally divided into $N$ orthogonal time slots with each slot equal to $\frac{T}{N}$ seconds.

\item Assuming perfect time synchronization, each transmitter is assigned a different (periodically repeating) time slot for transmission, i.e.,
in each frame, transmitter $i$ is allocated the time slot
$\left[\frac{(i-1)}{N}T, ~ \frac{i}{N}T\right)$, $1 \leq i \leq N$.

\item Assuming $\rho_i = \rho$ for all $i$'s, each transmitter
implements the ST protocol with the transmission time in each frame
aligned to be within its assigned time slot; as a result, the
maximum {\it transmit-ratio}, denoted by $1-\rho$, for each
transmitter cannot exceed $1/N$, which means that $\rho \geq 1-
\frac{1}{N}$.
\end{itemize}

The protocol described above is illustrated in Fig.
\ref{fig:Multiuser}. Unlike the single-channel case where the
transmitter can select any save-ratio $\rho$ in the interval $0\leq
\rho\leq 1$, in the case of TDMA-ST, $\rho$ is further constrained
by $\rho \geq 1- \frac{1}{N}$ to ensure orthogonal transmissions by
all transmitters. Due to this limitation, each transmitter may not
be able to work at its individual minimum outage probability unless
the corresponding optimal save-ratio $\rho^*$ satisfies $\rho^* \geq
1- \frac{1}{N}$ or $N \leq \frac{1}{1-\rho^*}$. In this case, ST protocol naturally extends to TDMA-ST with every transmitter operating at the optimal save-ratio $\rho^*$. However, if $N$ exceeds the threshold $\frac{1}{1-\rho^*}$, transmitters have to deviate from $\rho^*$ to maintain orthogonal transmissions. Next, we evaluate
the performance of TDMA-ST for two types of source data at
transmitters: Independent Data and Common Data.

\subsection{Independent Data}\label{sec:Independent information}
First, consider the case where all transmitters send independent
data packets to the FC in each frame, which are decoded separately
at the FC. Under the symmetric setup, for a given $\rho$, all
transmitters should have the same average outage performance.
Consequently, the system-level outage performance in the case of
independent data can be equivalently measured by that of the
individual transmitter, i.e.,
\begin{align}
P_{out}^s = P_{out}(\rho, \eta, P_c).
\end{align}
We can further investigate the following two cases:
\begin{itemize}
\item $N \leq \frac{1}{1-\rho^*}$ \\
In this case, the additional constraint due to TDMA, $\rho^* \geq 1-
\frac{1}{N}$, is satisfied. Since $P_{out}^s$ is the same as that of
the single-transmitter case, the system is optimized when all
transmitters work at their individual minimum outage with save-ratio
$\rho^*$. Thus, the minimum system outage probability is
$P_{out}^{s*} = P_{out}^*(\eta, P_c)$.

\item $N > \frac{1}{1-\rho^*}$ \\
In this case, the TDMA constraint on $\rho^*$ is violated and thus
we are not able to allocate all transmitters the save-ratio
$\rho^*$, which means that each transmitter has to deviate from its
minimum outage point. Since in this case $\rho^*< 1 -
\frac{1}{N}\leq \rho $, the best strategy for each transmitter is to
choose $\rho = 1 - \frac{1}{N} $. Thus, $P_{out}^{s*} =
P_{out}(1-\frac{1}{N},\eta, P_c)$.

\end{itemize}

To summarize, the optimal strategy for each transmitter in the case
of independent data is given by
\begin{align}\label{eq:multiple strategy}
\rho = \left\{\begin{array}{ll}
\rho^*, ~~~~ & N \leq \frac{1}{1-\rho^*}  \\
1 - \frac{1}{N}, ~~~~ & N > \frac{1}{1-\rho^*}
\end{array}
\right.
\end{align}
which implies that the number of transmitters should be kept below
the reciprocal of the single-channel optimal transmit-ratio;
otherwise, the system outage performance will degrade.

\subsection{Common Data}

Next, consider the case where all transmitters send identical data
packets in each frame to the FC, which applies diversity combining
techniques to decode the common data. For simplicity, we consider
selection combining (SC) at the receiver, but similar
results can be obtained for other diversity combining techniques
\cite{Tse}. With SC, the system outage probability is given by
\cite{Tse}
\begin{align}
P_{out}^s = \left(P_{out}(\rho, \eta, P_c)\right)^{N}.
\end{align}
Similarly to the case of independent data, we can get exactly the
same result for the optimal transmit strategy given in
(\ref{eq:multiple strategy}) for the common-data case, with which
the minimum system outage probability is obtained as
\begin{align}
P_{out}^{s*} = \left\{\begin{array}{ll}
\left(P_{out}^*(\eta, P_c)\right)^{N}, ~~~~ & N \leq \frac{1}{1-\rho^*}  \\
\left(P_{out}(1 - \frac{1}{N}, \eta, P_c)\right)^{N}, ~~~~ & N > \frac{1}{1-\rho^*}
\end{array}
\right.
\end{align}
From the above, it is evident that the system outage probability
initially drops as $N$ increases, provided that $N \leq
\frac{1}{1-\rho^*}$. However, when $N
> \frac{1}{1-\rho^*}$, it is not immediately clear whether the
system outage increases or decreases with $N$, since increasing $N$
improves the SC diversity, but also makes each transmitter deviate
even further from its minimum outage save-ratio according to
(\ref{eq:multiple strategy}).

\section{Numerical Examples}\label{sec:numerical}

In this section, we provide numerical examples to validate our
claims. We assume that the energy harvesting rate $X$
follows a uniform distribution (unless specified otherwise) within
$[0, 100]$ (i.e., $P_H = 100$ J/s), and the channel is Rayleigh
fading with exponentially distributed $\Gamma$ with parameter
$\lambda=0.02$. We also assume the target transmission rate
$R_{\mathrm{req}} = \frac{Q}{T} = 2$ bits/s.\footnote{This is
normalized to a bandwidth of 1 Hz, i.e.\ $R_{\mathrm{req}}$ is the
spectral efficiency in bis/s/Hz.}

\begin{figure}
\centering
\epsfxsize=0.7\linewidth
    \includegraphics[width=9cm]{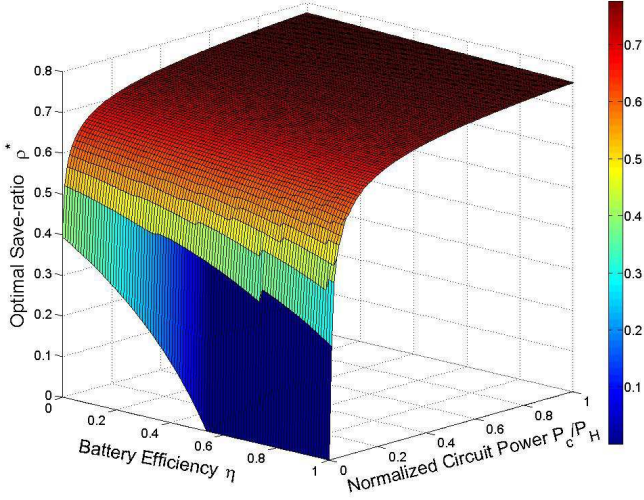}
\caption{Optimal save-ratio $\rho^*$} \label{fig:OptimalRho}
\end{figure}

Fig. \ref{fig:OptimalRho} demonstrates how battery efficiency $\eta$
and circuit power $P_c$ affect the optimal save-ratio $\rho^*$ for
the single-channel case. As observed, larger $P_c$ and smaller
$\eta$ result in larger $\rho^*$, i.e. shorter transmission time.
Since the increment is more substantial along $P_c$ axis, circuit
power has a larger influence on the optimal save-ratio compared with
battery efficiency. $\rho^*(1,0) = 0$ verifies the result of Lemma
\ref{lemma:1} for an ideal system, while $\rho^*(\eta, 0)$ along the
line $P_c = 0$ demonstrates the ``phase transition'' behavior stated
in Lemma \ref{lemma:2}. The transition point is observed to be $\eta
= 0.541$, which can also be computed from (\ref{eq:phase
transition}).

\begin{figure}
\centering
\epsfxsize=0.7\linewidth
    \includegraphics[width=9cm]{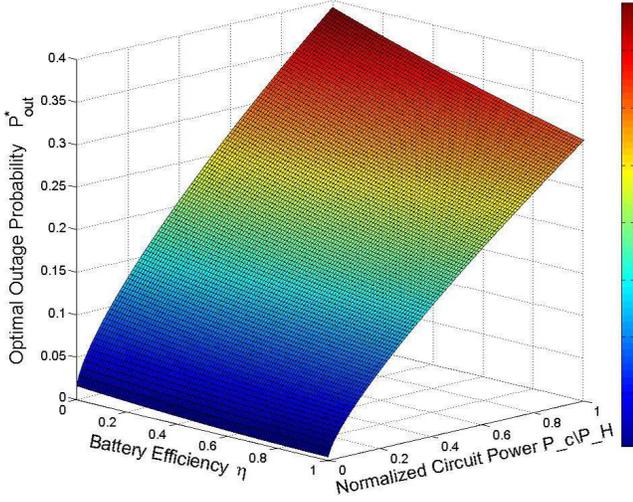}
\caption{Optimal outage probability $P_{out}^*$} \label{fig:OptimalOutage}
\end{figure}

Fig. \ref{fig:OptimalOutage} shows the optimal
(minimum) outage probability $P_{out}^*(\eta, P_c)$ corresponding to $\rho^*$ in
Fig. \ref{fig:OptimalRho}. Consistent with Proposition
\ref{proposition:1}, $P_{out}^*(\eta, P_c)$ is observed to be
monotonically decreasing with battery efficiency $\eta$ and
monotonically increasing with circuit power $P_c$. Again, $P_c$ affects outage performance more significantly
than $\eta$. From Fig. \ref{fig:OptimalOutage}, we see that for a
reasonable outage probability e.g. below $0.05$, $P_c$ has to be
small and $\eta$ has to be close to $1$. Our results can thus be
used to find the feasible region in the $\eta-P_c$ plane for a given
allowable $P_{out}$.

\begin{figure}
\centering
\epsfxsize=0.7\linewidth
    \includegraphics[width=8cm]{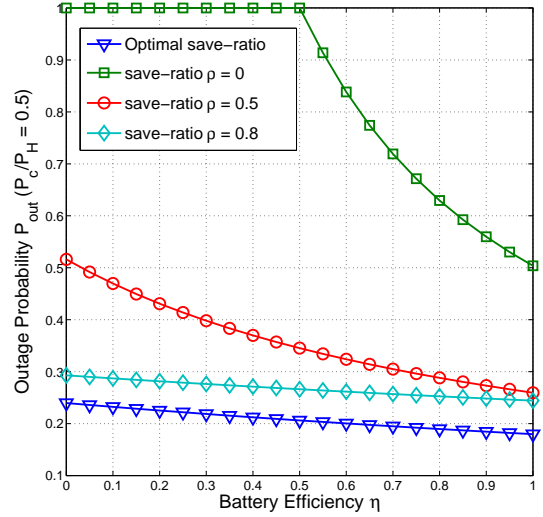}
\caption{Outage performance comparison: $\frac{P_c}{P_H} = 0.5$} \label{fig:CompareEta}
\end{figure}

\begin{figure}
\centering
\epsfxsize=0.7\linewidth
    \includegraphics[width=8cm]{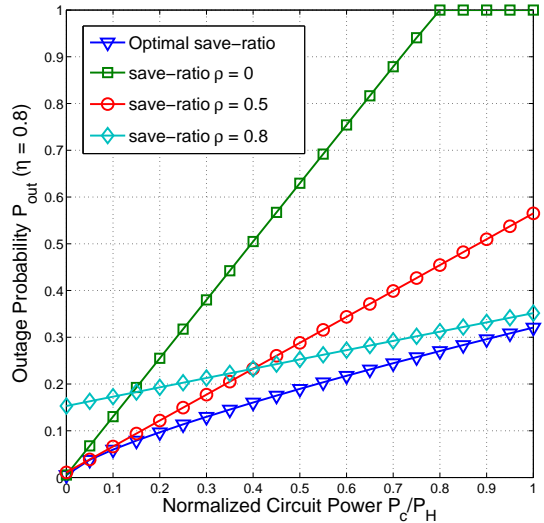}
\caption{Outage performance comparison: $\eta = 0.8$} \label{fig:ComparePc}
\end{figure}

Figs. \ref{fig:CompareEta} and \ref{fig:ComparePc} compare the
outage performance with versus without save-ratio optimization. In
Fig. \ref{fig:CompareEta} we fix the normalized circuit power
$\frac{P_c}{P_H} = 0.5$, while in Fig. \ref{fig:ComparePc} we fix
the battery efficiency $\eta = 0.8$. We observe that optimizing the
save-ratio can significantly improve the outage performance. It is
worth noting that $P_{out}$ has an approximately linear relationship
with the normalized circuit power $\frac{P_c}{P_H}$ as observed in
Fig. \ref{fig:ComparePc}, which indicates that $P_c$ considerably
affects the outage performance as stated previously.

\begin{figure}
\centering
\epsfxsize=0.7\linewidth
    \includegraphics[width=8cm]{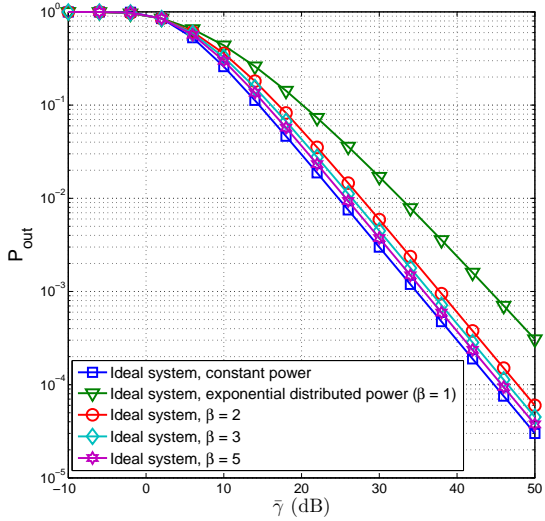}
\caption{Outage probability for an ideal ($\eta = 1$, $P_c = 0$) system with constant power
versus random power} \label{fig:Gammapower}
\end{figure}

\begin{figure}
\centering
\epsfxsize=0.7\linewidth
    \includegraphics[width=8cm]{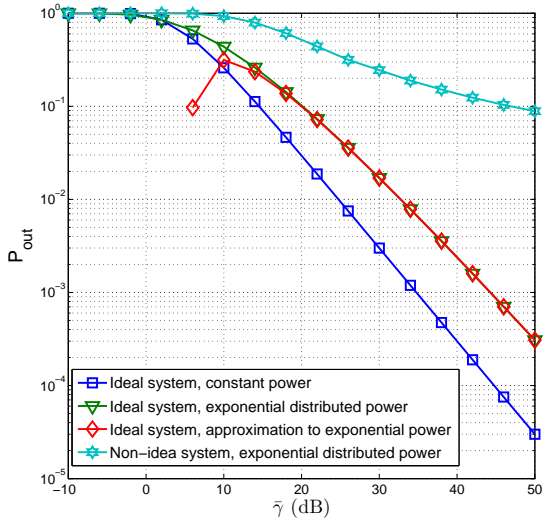}
\caption{Outage probability comparison for ideal ($\eta = 1$, $P_c = 0$) versus non-ideal ($\eta = 0.8$, $P_c = 0.1*\mathbb{E}[P]$)
systems} \label{fig:OutageConverge}
\end{figure}

In Fig. \ref{fig:Gammapower}, the outage probability for an ideal
system ($\eta = 1$, $P_c = 0$) is shown by numerically evaluating (\ref{eq:outage meijerG}).
By fixing the mean value of $P$ as $\mathbb{E}[P] = 50$ J/s and
varying $\beta$ for the Gamma distributed power from $1$ to $5$, the
resulting outage performance is compared with the case of constant
power. As observed, the outage probability increases due to the
existence of power randomness. As $\beta$ increases, the outage
curve approaches the case of constant power. In Fig.
\ref{fig:OutageConverge}, we also plot the outage probability for
the ideal system with exponentially distributed power based on the
approximation given in (\ref{eq:outage appximation}), as well as for
a non-ideal system with the normalized circuit power
$\frac{P_c}{\mathbb{E}[P]} = 0.1$ and battery efficiency $\eta = 0.8$. In
comparison with the constant-power case, for the case of ideal
system we observe that the high-SNR slope or diversity order with
random power clearly converges much slower with SNR, which is in
accordance with our analysis in Section \ref{sec:diversity}.
Furthermore, at $P_{out} = 10^{-3}$, there is about $10$ dB power
penalty observed due to exponential random power, even with the same
diversity order as the constant-power case. It is also observed that
there is a small rising part for the outage approximation given in
(\ref{eq:outage appximation}), since this approximation is only
valid for sufficiently high SNR values ($\bar{\gamma} > 10$ dB). Finally,
 it is worth noting that the outage probability for the
non-ideal system eventually saturates with SNR
(regardless of how small the noise power is or how large the SNR
is), which indicates that the diversity order is zero for any non-ideal system.

\begin{figure}
\centering
\epsfxsize=0.7\linewidth
    \includegraphics[width=8cm]{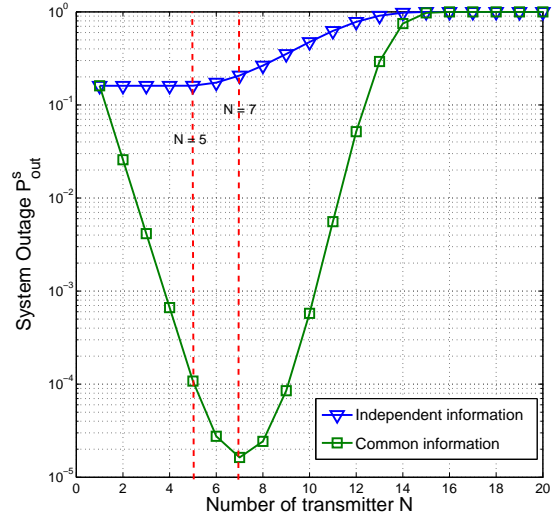}
\caption{Outage performance of multiple transmitters under TDMA-ST
protocol, with $\frac{1}{1 - \rho^*} = 4.83$} \label{fig:MultiOutage}
\end{figure}

Fig. \ref{fig:MultiOutage} shows the outage performance for the case
of multiple transmitters operating under the TDMA-ST protocol. We
set the normalized circuit power $\frac{P_c}{P_H} = 0.5$ and the
battery efficiency $\eta = 0.9$. Then, the optimal save-ratio
$\rho^*$ for single-transmitter outage minimization can be obtained
as $0.7930$ by numerical search. Therefore, the threshold value for
$N$ in the optimal rule of assigning save-ratio values in
(\ref{eq:multiple strategy}) is $\frac{1}{1 - \rho^*} = 4.83$. For
the case of independent data, it is observed that when $N \leq 4$,
the system outage probability is constantly equal to the optimal
single-transmitter outage probability $P_{out}^*(0.9,0.5P_H)$;
however, as $N>4$, the outage probability increases dramatically. In
contrast, for the case of common data, it is observed that the
system outage probability decreases initially as $N$ increases, even
after the threshold value and until $N = 7$, beyond which it starts
increasing. This implies that there is an optimal decision on the
number of transmitters to achieve the optimal outage performance.

\section{Conclusion}\label{sec:conclusion}
In this paper, we studied a wireless system under practical energy
harvesting conditions. Assuming a general model with non-ideal
energy storage efficiency and transmit circuit power, we proposed a
Save-then-Transmit (ST) protocol to optimize the system outage
performance via finding the optimal save-ratio. We characterized how
the optimal save-ratio and the minimum outage probability vary with
practical system parameters. We compared the outage performance
between random power and constant power under the assumption of
Rayleigh fading channel. It is shown that random power considerably
degrades the outage performance. Furthermore, we presented a TDMA-ST
protocol for wireless networks with multiple transmitters. In
particular, two types of source data are examined: independent data
and common data. It is shown that if the number of transmitters is
smaller than the reciprocal of the optimal transmit-ratio for
single-transmitter outage minimization, each transmitter should work
with its minimum outage save-ratio; however, when the number of
transmitters exceeds this threshold, each transmitter has to deviate
from its individual optimal operating point.

There are important problems that remain
unaddressed in this paper and are worth investigating in the future.
For example, we may consider the effect of different
configurations of battery/supercapacitor and MESD/SESD on the system
performance. It is also interesting to investigate the
information-theoretic limits for the ST protocol in the case of
multiple transmitters using more sophisticated multiple-access
techniques other than the simple TDMA.

\appendices

\section{Proof of Proposition \ref{proposition:1}}\label{appendix:proof 0}
According to the Fundamental Theorem of Calculus \cite{Walter1987},
we can derive the first derivative of $\hat{P}_{out}(\rho, \eta,
P_c)$ in (\ref{eq:simplified outage}) with respect to $\eta$, $P_c$
and $\rho$ as
\begin{align}
\frac{\partial \hat{P}_{out}}{\partial\beta} & = \left(\frac{P_c}{\frac{\rho}{1-\rho}+\eta}\right)^{'}f_x\left(\frac{P_c}{\frac{\rho}{1-\rho}+\eta}\right) \nonumber \\
& - \left(\frac{P_c}{\frac{\rho}{1-\rho}+\eta}\right)^{'}f_x\left(\frac{P_c}{\frac{\rho}{1-\rho}+\eta}\right)F_{\Gamma}(\infty) \nonumber \\
& +\int_{\frac{P_c}{\frac{\rho}{1-\rho}+\eta}}^{P_H}f_X(x)f_{\Gamma}\left[g(\cdot)\right]\frac{\partial g(\cdot)}{\partial\beta}dx \nonumber \\
& = \int_{\frac{P_c}{\frac{\rho}{1-\rho}+\eta}}^{P_H}f_X(x)f_{\Gamma}\left[g(\cdot)\right]\frac{\partial g(\cdot)}{\partial\beta}dx
\end{align}
where $\beta$ could be $\eta$, $P_c$ or $\rho$ and $g(\rho, \eta, P_c) = \frac{2^{\frac{Q}{(1-\rho)T}}-1}{x[\frac{\rho}{1-\rho}+\eta]-P_c}$.

It is easy to verify that $\frac{\partial g(\rho, \eta, P_c)}{\partial\eta} < 0, \forall \eta \in [0, 1]$ and $\frac{\partial g(\rho, \eta, P_c)}{\partial P_c} > 0, \forall P_c \in [0, \infty]$. Therefore $\hat{P}_{out}(\rho, \eta, P_c)$ is strictly decreasing with battery efficiency $\eta$ and strictly increasing with circuit power $P_c$. Next, we are going to prove the monotonicity of $P_{out}$ and $P_{out}^*$ with battery efficiency $\eta$, where circuit power $P_c$ is treated as constant.

The condition $P_H > \frac{P_c}{\frac{\rho}{1-\rho}+\eta}$ in (\ref{eq:outage definition}) could be expressed in terms of battery efficiency: $\eta > \frac{P_c}{P_H} - \frac{\rho}{1-\rho}$, then
\begin{align}
P_{out} = \left\{\begin{array}{ll}
1, ~~~~ & \eta \leq \frac{P_c}{P_H} - \frac{\rho}{1-\rho} \\
\hat{P}_{out}, ~~~~ & \eta > \frac{P_c}{P_H} - \frac{\rho}{1-\rho}
\end{array}
\right..
\end{align}

Consider the following two cases:
\begin{itemize}
\item Suppose $\eta_1 < \eta_2$ and $\frac{P_c}{P_H} - \frac{\rho}{1-\rho} > 0$
\begin{itemize}
    \item If $\frac{P_c}{P_H} - \frac{\rho}{1-\rho} < \eta_1 < \eta_2 $, then
     $P_{out}(\rho, \eta_1, P_c) = \hat{P}_{out}(\rho, \eta_1, P_c)$ and $P_{out}(\rho, \eta_2, P_c) = \hat{P}_{out}(\rho, \eta_2, P_c)$. Since $\hat{P}_{out}(\rho, \eta, P_c)$
     is strictly decreasing with battery efficiency $\eta$, we have
    \begin{equation}
    P_{out}(\rho, \eta_1, P_c) > P_{out}(\rho, \eta_2, P_c). \nonumber
    \end{equation}
    \item If $\eta_1 \leq \frac{P_c}{P_H} - \frac{\rho}{1-\rho} < \eta_2$, then $P_{out}(\rho, \eta_1, P_c) = 1$ and $P_{out}(\rho, \eta_2, P_c) = \hat{P}_{out}(\rho, \eta_2, P_c)$. Therefore
    \begin{equation}
    P_{out}(\rho, \eta_1, P_c) = 1 > P_{out}(\rho, \eta_2, P_c). \nonumber
    \end{equation}
    \item If $\eta_1 < \eta_2 \leq \frac{P_c}{P_H} - \frac{\rho}{1-\rho}$, then $P_{out}(\rho, \eta_1, P_c) = P_{out}(\rho, \eta_2, P_c) = 1$, which means
    \begin{equation}
    P_{out}(\rho, \eta_1, P_c) = P_{out}(\rho, \eta_2, P_c). \nonumber
    \end{equation}
\end{itemize}
\item Suppose $\eta_1 < \eta_2$ and $\frac{P_c}{P_H} - \frac{\rho}{1-\rho} \leq 0$, we have $\frac{P_c}{P_H} - \frac{\rho}{1-\rho} \leq \eta_1 < \eta_2$. Then it could be easily verified that
\begin{equation}
P_{out}(\rho, \eta_1, P_c) > P_{out}(\rho, \eta_2, P_c). \nonumber
\end{equation}
\end{itemize}

Combining all the above cases, we can conclude that $P_{out}(\rho,
\eta, P_c)$ is a non-increasing function of battery efficiency
$\eta$ given any non-zero circuit power $P_c$ for $\rho \in [0, 1)$.
Next, we proceed to prove the monotonicity of $P_{out}^*(\eta,
P_c)$.

Assuming $\eta_1 < \eta_2$ again, then we could argue that
$\frac{P_c}{P_H} - \frac{\rho_1^*}{1-\rho_1^*} < \eta_1$ and
$\frac{P_c}{P_H} - \frac{\rho_2^*}{1-\rho_2^*} < \eta_2$, where
$\rho_1^*$ and $\rho_2^*$ are the optimal save-ratio for $\eta =
\eta_1$ and $\eta=\eta_2$, respectively. Therefore we only need to
consider two cases: $\max\left\{\frac{P_c}{P_H} -
\frac{\rho_1^*}{1-\rho_1^*}, \frac{P_c}{P_H} -
\frac{\rho_2^*}{1-\rho_2^*}\right\} < \eta_1 < \eta_2$ and $\eta_1
\leq \max\left\{\frac{P_c}{P_H} - \frac{\rho_1^*}{1-\rho_1^*},
\frac{P_c}{P_H} - \frac{\rho_2^*}{1-\rho_2^*}\right\} < \eta_2$.
From the arguments we have given for the proof of the monotonicity
of $P_{out}$ we know that, under these two conditions we have
\begin{equation}
P_{out}(\rho, \eta_1, P_c) > P_{out}(\rho, \eta_2, P_c). \nonumber
\end{equation}
Therefore,
\begin{equation}
P_{out}^*(\eta_1, P_c) > P_{out}(\rho_1^*, \eta_2, P_c) \geq P_{out}^*(\eta_2, P_c) \nonumber
\end{equation}
which completes the proof of the monotonicity for $P_{out}^*(\eta,
P_c)$. With similar arguments, we could get the results regarding
circuit power $P_c$. Proposition \ref{proposition:1} is thus proved.

\section{Proof of Lemma \ref{lemma:1}}\label{appendix:proof 1}
Since $F_\Gamma(\cdot)$ is non-negative and non-decreasing, we have
\begin{equation}
   a < b \;\;\Rightarrow \;\;F_\Gamma\left(\frac{a}{x}\right) \leq F_\Gamma\left(\frac{b}{x}\right)
\end{equation}
for any $x \in [0,P_H]$. Since $f_X(\cdot)$ is non-negative, this leads to
\begin{align}
   a < b \;\;&\Rightarrow\;\; \int_0^{P_H} f_X(x)F_\Gamma\left(\frac{a}{x}\right) dx \nonumber \\
   &\leq \int_0^{P_H} f_X(x)F_\Gamma\left(\frac{b}{x}\right) dx. \nonumber
\end{align}
Given the form of $P_{out}$ in Problem (P2), with $\rho$ appearing only in the numerator of the argument of $F_\Gamma(\cdot)$, we conclude that $P_{out}$ is a non-decreasing function of $g(\rho) = \left(2^{\frac{Q}{(1-\rho)T}}-1\right)(1-\rho)$. Hence minimizing $g(\rho)$ is equivalent to minimizing $P_{out}$. The first and second derivatives of $g(\rho)$ are
\begin{align}
   g'(\rho) & = 2^{\frac{Q}{(1-\rho)T}}(\ln2)\frac{Q}{(1-\rho)T} - 2^{\frac{Q}{(1-\rho)T}} + 1 \nonumber \\
   g''(\rho) & = 2^{\frac{Q}{(1-\rho)T}}(\ln2)^2\frac{Q^2}{T^2(1-\rho)^3} > 0 \quad\mbox{since $Q > 0$.} \nonumber
\end{align}

Let $h(\rho) = g'(\rho)$. From the second equation above, $h(\rho)$ is an increasing function. In the range $0 \leq \rho \leq 1$, $h(\rho)$ is thus minimized at $\rho = 0$, i.e.\ the minimum of $g'(\rho)$ is $h(0)$, given by
\begin{eqnarray}
   g'_{\mathrm{min}} &=& 2^{\frac{Q}{T}}(\ln 2)\frac{Q}{T} - 2^{\frac{Q}{T}} + 1 \\
   &=& 2^{\frac{Q}{T}}\left(\ln 2^{Q/T} - 1\right) + 1 > 0
\end{eqnarray}
for $Q > 0$. In other words, the smallest value that the gradient of
$g(\rho)$ can take in the range $0 \leq \rho \leq 1$ for any
feasible $Q$ is positive, which implies that $g(\rho)$ is increasing
and therefore minimized at $\rho = 0$, as claimed. The proof of
Lemma \ref{lemma:1} is thus completed.

\section{Proof of Lemma \ref{lemma:2}}\label{appendix:proof 2}
To prove Property 1, we observe that as noted in the proof of Lemma \ref{lemma:1}, $P_{out}$ is a monotonic function of $g(\rho) = \frac{(2^{\frac{Q}{(1-\rho)T}}-1)}{(\frac{\rho}{1-\rho}+\eta)}$ in Problem (P3), hence minimizing $g(\rho)$ leads to the same solution as minimizing $P_{out}$. The first derivative of $g(\rho)$ is
\begin{align}
  g'(\rho) & = \frac{2^{\frac{Q}{(1-\rho)T}}(\ln2)\frac{Q}{(1-\rho)T}[\rho+\eta(1-\rho)] - 2^{\frac{Q}{(1-\rho)T}} + 1}{[\rho+\eta(1-\rho)]^2} \nonumber \\
  & = \frac{u(\rho)}{[\rho+\eta(1-\rho)]^2}. \nonumber
\end{align}
It is clear in the above that the sign of $g'(\rho)$ is the same as
that of $u(\rho)$. Since $u(1) = +\infty$ and $u(\rho)$ is a
differentiable function, if $u(0)$ is negative then there exists a
value $\rho_c \in (0,1)$ such that $u(\rho_c) = 0 = g'(\rho_c)$. It
is easily verified that $u'(\rho) > 0$; hence $\rho_c$ is the
unique optimal value of $\rho$ in this case. Conversely, if there
exists an $\rho_c$ such that $u(\rho_c) = 0$, then $u(0)$ must be
negative. Hence $u(0) < 0$ is a necessary and sufficient condition
for the optimal $\rho$ to lie in $(0,1)$.

The condition $u(0) < 0$ translates into the following condition on $\eta$, which proves the first part of the lemma:
\begin{align} \label{eq:batterycond}
   u(0) < 0 \;\;& \Rightarrow\;\; 2^{\frac{Q}{T}}(\ln2)\frac{Q}{T}\eta - 2^{\frac{Q}{T}} + 1  < 0 \nonumber \\
   & \Rightarrow ~~ \eta < \frac{2^{\frac{Q}{T}}-1}{2^{\frac{Q}{T}}(\ln2)\frac{Q}{T}}.
\end{align}

To prove the second point, suppose  $\rho_1^*(\eta_1,0)$ and
$\rho_2^*(\eta_2,0)$ are optimal save-ratios of (P3) for SESD
efficiencies $\eta_1$ and $\eta_2$, where $\eta_1 < \eta_2$. Then,
$u(\rho_1^*, \eta_1) = 0$ and $u(\rho_2^*, \eta_2) = 0$. Since
$\eta_1 < \eta_2$ and $u(\rho, \eta)$ is an increasing function of
$\eta$, we have $u(\rho_1^*, \eta_2) > 0$. Combining what we have
that $u(\rho, \eta)$ is an increasing function of $\rho$,
$u(\rho_2^*, \eta_2) = 0$ and $u(\rho_1^*, \eta_2) > 0$, we may
conclude $\rho_2^*(\eta_2,0) < \rho_1^*(\eta_1,0)$. Lemma
\ref{lemma:2} is thus proved.

\section{Proof of Lemma \ref{lemma:3}}\label{appendix:proof 3}
According to the proof of Proposition \ref{proposition:1}, the first derivative of $\hat{P}_{out}(\rho, \eta, P_c)$ with respect to $\eta$, $P_c$ and $\rho$ is,
\begin{align}
\frac{\partial \hat{P}_{out}}{\partial\beta}  = \int_{\frac{P_c}{\frac{\rho}{1-\rho}+\eta}}^{P_H}f_X(x)f_{\Gamma}\left[g(\cdot)\right]\frac{\partial g(\cdot)}{\partial\beta}dx \nonumber
\end{align}
where $\beta$ could be $\eta$, $P_c$ or $\rho$ and $g(\rho, \eta,
P_c) =
\frac{2^{\frac{Q}{(1-\rho)T}}-1}{x[\frac{\rho}{1-\rho}+\eta]-P_c}$.
Furthermore, we have
\begin{align}
\frac{\partial g(\rho)}{\partial\rho} & = \frac{2^{\frac{Q}{(1-\rho)T}}(\ln2)\frac{Q}{(1-\rho)T}\left[\rho+\eta(1-\rho) - (1-\rho)\frac{P_c}{x}\right]}{x\left[\rho+\eta(1-\rho) - (1-\rho)\frac{P_c}{x}\right]^{2}} \nonumber \\
& - \frac{2^{\frac{Q}{(1-\rho)T}} - 1}{x\left[\rho+\eta(1-\rho) - (1-\rho)\frac{P_c}{x}\right]^{2}} \nonumber \\
& = \frac{v(\rho)}{x\left[\rho+\eta(1-\rho) - (1-\rho)\frac{P_c}{x}\right]^{2}}. \nonumber
\end{align}
With similar arguments about $u(\rho)$ in the proof of Lemma \ref{lemma:2}, we claim that $v(0) < 0, \forall x \in (\frac{P_c}{\frac{\rho}{1-\rho}+\eta}, P_H]$ is a sufficient condition of having $\rho^*(\eta, P_c) > 0$ while $P_c < \eta P_H$.

Since $v(0)$ is an increasing function of $x$, the condition $v(0) < 0, \forall x \in (\frac{P_c}{\frac{\rho}{1-\rho}+\eta}, P_H]$ translates into the following condition on $\eta$ and $P_c$
\begin{align}
v(0) & = 2^{\frac{Q}{T}}(\ln2)\frac{Q}{T}\left(\eta - \frac{P_c}{x}\right) - 2^{\frac{Q}{T}} + 1 \nonumber \\
& < 2^{\frac{Q}{T}}(\ln2)\frac{Q}{T}\left(\eta - \frac{P_c}{P_H}\right) - 2^{\frac{Q}{T}} + 1 < 0 \nonumber \\
& \Longrightarrow 0 < \eta - \frac{P_c}{P_H} < \frac{2^{\frac{Q}{T}} - 1}{2^{\frac{Q}{T}}(\ln2)\frac{Q}{T}}. \nonumber
\end{align}
Combined with the fact that $\rho^*(\eta, P_c) > \frac{\frac{P_c}{P_H}-\eta}{1-\eta+\frac{P_c}{P_H}}$ when $P_c \geq \eta P_H$, we may conclude
\begin{align}\label{eq:circuit battery region}
\rho^*(\eta, P_c) > 0, ~~~~ \eta - \frac{P_c}{P_H} < \frac{2^{\frac{Q}{T}} - 1}{2^{\frac{Q}{T}}(\ln2)\frac{Q}{T}}.
\end{align}
Lemma \ref{lemma:3} is thus proved.

\section{Proof of Lemma \ref{lemma:4}}\label{appendix:proof 4}
Let $Z = P\Gamma$, where $P$ and $\Gamma$ are exponential random
variables with mean $\lambda_p$ and $\lambda_{\gamma}$ respectively.
Then the PDF of $Z$ could be derived as follows,
\begin{align}
F_Z(z) & = \mbox{Pr}\left\{P\Gamma \leq z \right\} \nonumber \\
& = 1 - \frac{1}{\lambda_p}\int_0^{\infty}e^{-\frac{z}{p\lambda_{\gamma}}}e^{-\frac{p}{\lambda_p}}dp \nonumber \\
& = 1 - 2\sqrt{\frac{z}{\lambda_p\lambda_{\gamma}}}K_1\left(2\sqrt{\frac{z}{\lambda_p\lambda_{\gamma}}}\right)
\end{align}
where $K_1(x)$ is the first-order modified Bessel function of the
second kind and the last equality is given by
\cite[\S3.324.1]{II1980}:
\begin{align}
\int_0^{\infty}\exp\left(-\frac{\beta}{4x} - \gamma x\right)dx = \sqrt{\frac{\beta}{\gamma}}K_1\left(\sqrt{\beta\gamma}\right) \nonumber
\end{align}
where $\Re(\beta) \geq 0, \Re(\gamma) \geq 0$. Let $M =
\frac{1}{\sqrt{\lambda_p\lambda_{\gamma}}}$. Taking the derivative
of $F(z)$ yields
\begin{align}\label{eq:doublepdf}
f(z) & = M\left\{-\frac{1}{\sqrt{z}}K_1\left(2M\sqrt{z}\right) - 2\sqrt{z}\left(K_1\left(2M\sqrt{z}\right)\right)^{'}\right\} \nonumber \\
& = M\left\{-\frac{1}{\sqrt{z}}K_1\left(2M\sqrt{z}\right)\right. \nonumber \\
& - \left.2\sqrt{z}\left(-K_0(2M\sqrt{z})-\frac{1}{2M\sqrt{z}}K_1(2M\sqrt{z})\right)\frac{M}{\sqrt{z}}\right\} \nonumber \\
& = 2M^2K_0\left(2M\sqrt{z}\right) \nonumber \\
& = \frac{2}{\lambda_p\lambda_{\gamma}}K_0\left(2\sqrt{\frac{z}{\lambda_p\lambda_{\gamma}}}\right)
\end{align}
where $\frac{\partial K_v(z)}{\partial z} = - K_{v-1}(z) - \frac{v}{z}K_v(z)$.

Next, we characterize the outage probability using
(\ref{eq:doublepdf}). According to (\ref{eq:outage for diversity
analysis}), we have
\begin{align}
P_{out}^{*} &= \mbox{Pr}\left[P\Gamma < C\right] \nonumber \\
& =
\int_{0}^{C}\frac{2}{\lambda_p\lambda_{\gamma}}K_0\left(2\sqrt{\frac{z}{\lambda_p\lambda_{\gamma}}}\right)dz.
\end{align}
Let $X = \frac{z}{\lambda_p\lambda_{\gamma}}$ and $D =
\frac{C}{\lambda_p\lambda_{\gamma}}$. We then have
\begin{align}\label{eq:outage bessel}
P_{out}^{*} = 2\int_0^{D}K_0\left(2\sqrt{x}\right)dx.
\end{align}
Using the series presentation \cite[\S8.447.3]{II1980}, we have
\begin{align}
K_0(x) = -\ln\left(\frac{x}{2}\right)I_0(x)+ \sum_{k=0}^{\infty}\frac{x^{2k}}{2^{2k}(k!)^2}\psi(k+1)
\end{align}
with the series expansion for the modified Bessel function given by
\begin{align}
I_0(x) = \sum_{k=0}^{\infty}\frac{x^{2k}}{2^{2k}(k!)^2}.
\end{align}
(\ref{eq:outage bessel}) could be expanded as
\begin{align}\label{eq:expandedoutage}
P_{out}^{*} = \sum_{k=0}^{\infty}\frac{2}{(k!)^2}\left[-\frac{1}{2}\int_0^D x^k\ln xdx + \psi(k+1)\int_0^{D}x^kdx\right]
\end{align}
where
\begin{align}
\psi(x) = \frac{d}{dx}\ln \Gamma(x) = \frac{\Gamma(x)^{'}}{\Gamma(x)}
\end{align}
is the digamma function \cite{Abramowitz}. Since the two integrals in (\ref{eq:expandedoutage}) could be evaluated as
\begin{align}
\int_0^Dx^kdx & = \frac{D^{k+1}}{k+1} \nonumber \\
\int_0^D x^k\ln xdx & = \left.x^{k+1}\left(\frac{\ln x}{k+1} - \frac{1}{(k+1)^2}\right)\right|_{x=0}^{x=D} \nonumber \\
& = D^{k+1}\left(\frac{\ln D}{k+1} - \frac{1}{(k+1)^2}\right) \nonumber
\end{align}
where $\lim_{x \rightarrow 0} (x\ln{x}) = 0$. Then we have
\begin{align}
P_{out}^{*} =
\sum_{k=0}^{\infty}\frac{2}{(k!)^2}\frac{D^{k+1}}{k+1}\left[-\frac{1}{2}\left(\ln
D - \frac{1}{k+1}\right) + \psi(k+1)\right].
\end{align}
Since $D = \frac{C}{\lambda_p\lambda_{\gamma}} =
\frac{C\sigma_n^2}{\lambda_p\sigma_h^2} = \frac{C}{\bar{\gamma}}$,
(\ref{eq:outage seriel representation}) follows. Lemma \ref{lemma:4}
is thus proved.

\end{document}